\newtheorem{theorem}{Theorem}
\newtheorem{lemma}{Lemma}
\newtheorem{corollary}{Corollary} 
\theoremstyle{remark}
\newtheorem*{remark}{Remark} 
\newcommand{\namedref}[2]{\hyperref[#2]{#1~\ref*{#2}}}
\newcommand{\sectionref}[1]{\namedref{Section}{#1}}
\newcommand{\appendixref}[1]{\namedref{Appendix}{#1}}
\newcommand{\figureref}[1]{\namedref{Figure}{#1}}
\newcommand{\lemmaref}[1]{\namedref{Lemma}{#1}}
\newcommand{\tableref}[1]{\namedref{Table}{#1}}
\newcommand{\bftab}{\fontseries{b}\selectfont}
\newcommand{\cA}{\mathcal{A}}
\newcommand{\byz}{*}
\renewcommand{\vec}[1]{\mathbf{#1}}
\newcommand{\algorithm}[1]{%
\begin{framed}%
\begin{enumerate}%
#1
\end{enumerate}%
\end{framed}%
}
\newcommand{\block}[1]{%
\begin{itemize}[noitemsep,label=$\cdot$]%
#1
\end{itemize}%
}
\newenvironment{mycover}
               {\list{}{\listparindent 0pt
                        \itemindent    \listparindent
                        \leftmargin    0pt
                        \rightmargin   0pt
                        \parsep        0pt}%
                \raggedright
                \item\relax}
               {\endlist}
\begin{document}

\hypersetup{
    pdfauthor={Danny Dolev, Keijo Heljanko, Matti J\"arvisalo, Janne H. Korhonen, Christoph Lenzen, Joel Rybicki, Jukka Suomela, Siert Wieringa},
    pdftitle={Synchronous Counting and Computational Algorithm Design},
}

\begin{mycover}
{\LARGE \textbf{Synchronous Counting and\\Computational Algorithm Design}\par}

\bigskip
\bigskip
\textbf{Danny Dolev}\\
{\small School of Engineering and Computer Science,\\
The Hebrew University of Jerusalem\par}

\medskip
\textbf{Keijo Heljanko}\\
{\small Helsinki Institute for Information Technology HIIT, \\
Department of Computer Science and Engineering, Aalto University 
\par}

\medskip
\textbf{Matti J\"arvisalo}\\
{\small Helsinki Institute for Information Technology HIIT, \\
Department of Computer Science, University of Helsinki
\par}

\medskip
\textbf{Janne H.\ Korhonen}\\
{\small Helsinki Institute for Information Technology HIIT, \\
Department of Computer Science, University of Helsinki\par}

\medskip
\textbf{Christoph Lenzen}\\
{\small Department of Algorithms and Complexity, MPI Saarbr\"{u}cken\par}

\medskip
\textbf{Joel Rybicki}\\
{\small Helsinki Institute for Information Technology HIIT, \\
Department of Information and Computer Science, Aalto University\par}

\medskip
\textbf{Jukka Suomela}\\
{\small Helsinki Institute for Information Technology HIIT, \\
Department of Information and Computer Science, Aalto University\par}

\medskip
\textbf{Siert Wieringa}\\
{\small Helsinki Institute for Information Technology HIIT, \\
Department of Computer Science and Engineering, Aalto University \par}

\end{mycover}

\paragraph{Abstract.}

Consider a complete communication network on $n$ nodes, each of which is a state machine. In \emph{synchronous $2$-counting}, the nodes receive a common clock pulse and they have to agree on which pulses are ``odd'' and which are ``even''. We require that the solution is \emph{self-stabilising} (reaching the correct operation from any initial state) and it tolerates $f$ \emph{Byzantine failures} (nodes that send arbitrary misinformation). Prior algorithms are expensive to implement in hardware: they require a source of random bits or a large number of states.

This work consists of two parts. In the first part, we use computational techniques (often known as \emph{synthesis}) to construct very compact deterministic algorithms for the first non-trivial case of $f=1$. While no algorithm exists for $n < 4$, we show that as few as $3$ states per node are sufficient for all values $n \ge 4$. Moreover, the problem cannot be solved with only $2$ states per node for $n = 4$, but there is a $2$-state solution for all values $n \ge 6$.

In the second part, we develop and compare two different approaches for synthesising synchronous counting algorithms. Both approaches are based on casting the synthesis problem as a propositional satisfiability (SAT) problem and employing modern SAT-solvers. The difference lies in how to solve the SAT problem: either in a direct fashion, or incrementally within a \emph{counter-example guided abstraction refinement} loop. Empirical results suggest that the former technique is more efficient if we want to synthesise time-optimal algorithms, while the latter technique discovers non-optimal algorithms more quickly.

\newpage

\section{Introduction}\label{sec:intro}

\paragraph{Synchronous Counting.}

In the \emph{synchronous $C$-counting} problem, $n$ nodes have to count clock pulses modulo~$C$. Starting from any initial configuration, the system has to \emph{stabilise} so that all nodes agree on the clock value.
\begin{center}
    \includegraphics[page=1]{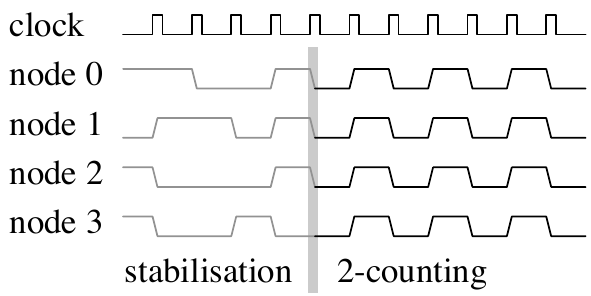}
\end{center}

Each node is a finite state machine with $s$ states, and after every state transition, each node \emph{broadcasts} its current state to all other nodes---effectively, each node can see the current states of all other nodes. An algorithm specifies (1)~the new state for each observed state, and (2)~how to map the internal state of a node to its output.

\paragraph{Byzantine Fault Tolerance.}

In a fault-free system, the $C$-counting problem is trivial to solve. For example, we can designate node $0$ as a leader, and then all nodes (including the leader itself) can follow the leader: if the current state of the leader is $c$, the new state is $c+1 \bmod C$. This algorithm will stabilise in time $t = 1$, and we only need $s = C$ different states.

However, we are interested in algorithms that tolerate \emph{Byzantine failures}. Some number $f$ of the nodes may be \emph{faulty}. A faulty node may send arbitrary misinformation to non-faulty nodes, including \emph{different} information to different nodes within the same round. For example, if we have nodes $0,1,2,3$ and node $2$ is faulty, node $0$ might observe the state vector $(0,1,1,1)$, while node $1$ might observe the state vector $(0,1,0,1)$.

Our goal is to design an algorithm with the following guarantee: even if we have up to $f$ faulty nodes, no matter what the faulty nodes do, the system will stabilise so that after $t$ rounds all non-faulty nodes start to count clock pulses consistently modulo $C$. We will give a formal problem definition in \sectionref{sec:model}.

\begin{center}
    \includegraphics[page=2]{figs.pdf}
\end{center}

\paragraph{State of the Art.}

Both randomised and deterministic algorithms for synchronous counting have been presented in the literature (see \sectionref{sec:related}). However, prior algorithms tend to be expensive to implement in hardware: they require a source of random bits or complicated circuitry.

In this work, we use a single parameter $s$, the number of states per node, to capture the complexity of an algorithm. If one resorts to randomness, it is possible to solve $2$-counting with the trivially optimal number of $s=2$ states---at the cost of a slow stabilisation time (see Sections \ref{sec:related} and~\ref{sec:human-algs}). However, it is not at all clear whether a small number of states suffices for \emph{deterministic} algorithms.

\paragraph{Contributions.}

We employ \emph{computational} techniques to design deterministic $2$-counting algorithms that have the smallest possible number of states. Our contributions are two-fold:
\begin{enumerate}
    \item we present new algorithms for the synchronous counting problem,
    \item we develop new computational techniques for constructing self-stabilising Byzantine fault-tolerant algorithms.
\end{enumerate}

Our focus is on the first non-trivial case of $f = 1$. The case of $n = 1$ is trivial, and by prior work it is known that there is no algorithm for $1 < n < 4$. We give a detailed analysis of $2$-counting for $n \ge 4$:
\begin{itemize}[noitemsep]
    \item there is no deterministic algorithm for $f = 1$ and $n = 4$ with $s = 2$ states,
    \item there is a deterministic algorithm for $f = 1$ and $n \ge 4$ with $s = 3$ states,
    \item there is a deterministic algorithm for $f = 1$ and $n \ge 6$ with $s = 2$ states.
\end{itemize}
Overall, we develop more than a dozen different algorithms with different characteristics, each of which can be also generalised to a larger number of nodes. See \figureref{fig:states} for an overview of the time--space tradeoffs that we achieve with our algorithms.

\begin{figure}
    \centering
    \includegraphics{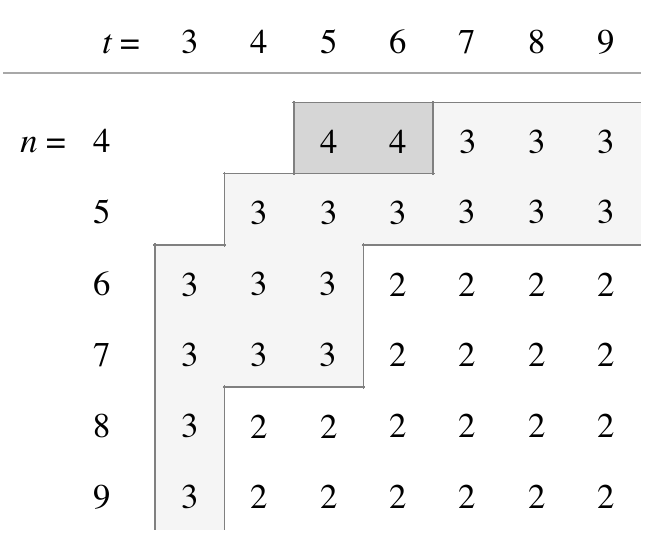}
    \caption{Time--space tradeoffs in our computer-designed algorithms. The figures shows $s$ (the number of states) for each combination of $n$ (the number of node) and $t$ (the stabilisation time).}\label{fig:states}
\end{figure}

With very few states per node, our algorithms are easy to implement in hardware. For example, a straightforward implementation of our algorithm for $f = 1$, $n = 4$, and $s = 3$ requires just $2$ bits of storage per node, and a lookup table with $81$ entries. All of our computer-designed algorithms are freely available online~\cite{results} in a machine-readable format. While our algorithms are synchronous $2$-counters, they can be easily composed to construct synchronous $2^b$-counters (see \sectionref{sec:app} for details).

This work can be seen as a case study of applying synthesis techniques in the area of distributed algorithms. We demonstrate that the synthesis of non-trivial self-stabilising Byzantine fault-tolerant algorithms is indeed possible with the help of modern propositional satisfiability (SAT) solvers~\cite{een04minisat,biere14lingeling}. We describe two complementary approaches for the synthesis of synchronous $2$-counting algorithms and give an empirical comparison of their relative performance:
\begin{enumerate}[noitemsep]
    \item a direct encoding as SAT,
    \item a SAT-based counter-example guided abstraction refinement (CEGAR)~\cite{clarke03cegar,clarke04sat-cegar} approach.
\end{enumerate}
Both approaches make it possible to use modern SAT solvers and to benefit from the steady progress in SAT solver technology. As we will see, the former approach is typically more efficient for tightly-specified problems (e.g., synthesising both space-optimal and time-optimal algorithms), while the latter is more promising for more relaxed problems (e.g., synthesising space-optimal algorithm regardless of the stabilisation time).

\paragraph{Structure.}

\sectionref{sec:related} covers related work and \sectionref{sec:app} discusses applications of synchronous $2$-counters. \sectionref{sec:model} gives a formal definition of the problem, and \sectionref{sec:human-algs} gives two examples of human-designed algorithms. \sectionref{sec:projgraph} gives a graph-theoretic interpretation that is helpful in the analysis of counting algorithms. In \sectionref{sec:smalln} we show that (1)~we can increase $n$ for free, without affecting the parameters $f$, $s$, or $t$; this enables us to focus on small values of~$n$, and (2)~we can generalise the algorithms to a larger class of network topologies with a slight cost in stabilisation time. \sectionref{sec:machine-pos} presents an overview of the use of computers in algorithm design and highlights the new results for synchronous counting. \sectionref{sec:sat-synth} describes a direct formulation of the synthesis problem for synchronous counting algorithms as propositional satisfiability. \sectionref{sec:ceg-synth} describes the SAT-based counter-example guided abstraction refinement synthesis technique. Finally, \sectionref{sec:comparison-results} overviews the results of the empirical evaluation of the two different synthesis techniques, suggesting a tradeoff between establishing the existence of any algorithm and finding optimal algorithms.

\section{Related Work}\label{sec:related}

\paragraph{Randomised Algorithms for Synchronous Counting.}

Randomised algorithms for synchronous $2$-counting are known, with different time--space tradeoffs.

The algorithm by Dolev and Welch~\cite{dolev04clock-synchronization} requires only $s=3$ states, but the stabilisation time is $t=2^{O(f)}$. Here we are assuming that $n = O(f)$; for a large $n$, we can run the algorithm with $O(f)$ nodes only and let the remaining nodes follow the majority.

The algorithm by Ben-Or et al.~\cite{ben-or08fast} stabilises in expected constant time. However, it requires $\Omega(2^f)$ states and private channels (i.e., the adversary has limited information on the system's state).

\paragraph{Deterministic Algorithms for Synchronous Counting.}

The fastest known deterministic algorithm is due to Dolev and Hoch~\cite{dolev07actions}, with a stabilisation time of $O(f)$. However, the algorithm is not well suited for a hardware implementation. It uses as a building block several instances of algorithms that solve the Byzantine consensus problem---a non-trivial task in itself. The number of states is also large, as some storage is needed for each Byzantine consensus instance.

\paragraph{Consensus Lower Bounds for Synchronous Counting.}

\emph{Binary consensus} is a classical problem that has been studied in the context of Byzantine fault tolerance; see, e.g., the textbook by Lynch~\cite{lynch96book} for more information. In brief, the problem is defined as follows. Each node has a binary input, and all non-faulty nodes have to produce the same binary output, $0$ or $1$. If all inputs are equal to $0$, the common output has to be $0$, and if all inputs are equal to $1$, the common output has to be $1$; otherwise the common output can be either $0$ or $1$. It is easy to show that synchronous $2$-counting is at least as difficult to solve as binary consensus.

\begin{lemma}
    If we have a $2$-counting algorithm $\cA$ that stabilises in time $t$, we can design an algorithm that solves binary consensus in time $t$, for the same parameters $n$ and $f$.\label{lemma:reduction}
\end{lemma}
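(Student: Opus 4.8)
The plan is to run the counter $\cA$ essentially as a black box and to read off its stabilised output as the consensus decision. Concretely, each node will start $\cA$ from an initial state that depends only on its own binary input, run $\cA$ for $t$ rounds, and output the counter value that $\cA$ produces at round $t$. The self-stabilisation property of $\cA$ will hand us \emph{agreement} for free, so the entire content of the reduction lies in arranging \emph{validity}, i.e.\ in forcing the decision to be $0$ when all non-faulty inputs are $0$ and $1$ when they are all $1$.

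First I would isolate the two features of $\cA$ that the construction needs. The first is agreement: by the definition of stabilising in time $t$, from \emph{any} initial configuration and against \emph{any} behaviour of the $f$ faulty nodes, all non-faulty nodes output one and the same counter value at round $t$. The second is the existence of well-behaved initial configurations. Call a joint state of the nodes \emph{legitimate} if, starting from it, the non-faulty nodes count correctly from round $0$ onwards no matter what the faulty nodes do. By the very meaning of correct counting, the legitimate configurations are closed under the round transition and their counter output flips every round; consequently there exist legitimate configurations $L^{(0)}$ and $L^{(1)}$ whose outputs at round $t$ are $0$ and $1$ respectively (one is obtained from the other by advancing a single round, which flips the round-$t$ parity).

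Given $L^{(0)}$ and $L^{(1)}$, the consensus algorithm $\mathcal{B}$ is immediate: a node with input $b$ initialises its copy of $\cA$ to its own component of $L^{(b)}$, runs $\cA$ for $t$ rounds, and decides on the counter value at round $t$. Crucially, each node fixes its starting state from its own input alone. Agreement is then exactly the first feature above, and it holds for arbitrary (possibly mixed) inputs. For validity, if every non-faulty node has input $0$ then the non-faulty nodes jointly occupy the non-faulty part of the legitimate configuration $L^{(0)}$; hence they count correctly from the outset and decide its round-$t$ value $0$, whatever the $f$ faulty nodes do, and symmetrically for all-$1$ inputs via $L^{(1)}$. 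The protocol runs for exactly $t$ rounds with the same $n$ and $f$, as required.

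The step I expect to be the real obstacle is the second feature, and specifically its \emph{robustness}: I must exhibit configurations that are legitimate not for one fixed faulty set but \emph{simultaneously for every} set of at most $f$ faulty nodes, since the reduction has to commit to $L^{(0)}$ and $L^{(1)}$ before the adversary reveals which nodes are faulty. This is precisely the point at which the Byzantine fault-tolerance of $\cA$ --- rather than mere self-stabilisation --- is used: a correct $f$-tolerant counter must keep the non-faulty nodes counting once they are synchronised, regardless of which $f$ nodes misbehave. Making this precise means working from the formal model of \sectionref{sec:model} and the projection-graph characterisation of \sectionref{sec:projgraph}, showing that a fully synchronised $n$-node configuration restricts to a legitimate one for every admissible faulty set; the remaining verification of agreement, validity, and the timing bound is then routine bookkeeping.
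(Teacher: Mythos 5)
Your proposal is correct and takes essentially the same approach as the paper: initialise each node from its own component of a stabilised configuration determined by its input, run $\cA$ for $t$ rounds, and decide on the counter output, with agreement following from stabilisation and validity from legitimacy of the chosen configuration. The only cosmetic difference is that you absorb the parity of $t$ into the choice of $L^{(b)}$ (picking the configuration whose round-$t$ output is $b$), whereas the paper fixes configurations by their round-$0$ output and complements the inputs when $t$ is odd; the robustness issue you flag is settled in this model by the definition of stabilisation itself, since the stabilised configurations are the projections $\vec 0_F, \vec 1_F$ of the fixed global all-zeros and all-ones vectors for every admissible $F$.
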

\begin{proof}
Let $\vec{x}(0)$ and $\vec{x}(1)$ be some configurations that may occur during the correct operation of $\cA$ after it has stabilised, so that in configuration $\vec{x}(a)$ all nodes output $a$. More specifically:
\begin{itemize}
    \item For any $a = 0, 1$ and $j = 0,1,2,\dotsc$, if we initialise the system with configuration $\vec{x}(a)$ and run $\cA$ for $j$ rounds, all non-faulty nodes output $(a+j) \bmod 2$.
\end{itemize}
First assume that $t$ is even. Each node $i$ receives its input $a$ for the binary consensus problem. We use the element $i$ of $\vec{x}(a)$ to initialise the state of node $i$. Then we run $\cA$ for $t$ rounds. Finally, the output of algorithm $\cA$ forms the output of the binary consensus instance. To see that the algorithm is correct, we make the following observations:
(1)~All non-faulty nodes produce the same output at time $t$, regardless of the input.
(2)~If all inputs had the same value $a$, we used $\vec{x}(a)$ to initialise all nodes, and hence the final output is $a$.

For an odd $t$, we can use the same approach if we complement the inputs. In summary, $\cA$ can be used to solve binary consensus in time $t$.
\end{proof}

\noindent Now we can invoke the familiar lower bounds related to the consensus problem:
\begin{itemize}[noitemsep]
    \item no algorithm can tolerate $f \ge n/3$ failures \cite{pease80reaching},
    \item no deterministic algorithm can solve the problem in $t < f+1$ rounds \cite{fischer82lower}.
\end{itemize}

\paragraph{Pulse Synchronisation.}

Both $2$-counting and \emph{pulse synchronisation}~\cite{daliot03self-stabilizing,dolev04clock-synchronization} have a superficially similar goal: produce well-separated, (approximately) synchronised clock pulses in a distributed system in a fault-tolerant manner. However, there are also many differences: in pulse synchronisation the task is to construct a clock pulse without any external reference, while in $2$-counting we are given a reference clock and we only need to construct a clock that ticks at a slower rate. Also the models of computation differ---for pulse synchronisation, a relevant model is an asynchronous network with some bounds on propagation delays and clock drifts.

A $2$-counting algorithm does not solve the pulse synchronisation problem, and a pulse synchronisation algorithm does not solve the $2$-counting problem. However, if one is designing a distributed system that needs to produce synchronised clock ticks in a fault-tolerant manner, either of the approaches may be applicable.

\paragraph{Computational Algorithm Design.}

The computational element of our work can be interpreted as a form of \emph{algorithm synthesis}. In synthesis, the task is to algorithmically find an algorithm or a protocol that satisfies a given specification. The idea of synthesising circuits was proposed by e.g. Church~\cite{church62logic} already in the 1960s and there exists a vast body of work related to synthesis.

Classic work on model checking~\cite{clarke82design,manna84synthesis} consider algorithms for synthesis of both shared-memory and message-passing protocols by solving the satisfiability of certain temporal logic formulas. Unfortunately, synthesis of distributed systems is often intractable both in theory and practice---distributed synthesis problems are often either of high complexity or undecidable~\cite{pnueli90distributed,naor95what,finkbeiner05uniform}. However, despite the hardness of synthesis---or because of it---several techniques have been proposed to make synthesis tractable~\cite{finkbeiner12lazy,jacobs12parameterized,finkbeiner13bounded}. 

In contrast to applying general synthesis techniques, that is, algorithms for synthesising a general class of problems, combinatorial search algorithms have also been applied to solve specific synthesis problems. For example, SAT solvers have been used for, e.g., circuit synthesis~\cite{kojevnikov09finding,grosse09synthesis,fuhs10synthesizing,jarvisalo12finding,bloem14sat}, synthesis from safety specifications~\cite{bloem14synthesis}, controller synthesis~\cite{morgenstern13games}, program sketching~\cite{solar06sketching}, synthesising sorting networks~\cite{morgenstern11synthesis,bundala14sorting,codish14comparators}, and synthesising local graph algorithms~\cite{rybicki11msc,hirvonen14local-maxcut}.

\section{Applications}\label{sec:app}

\paragraph{Counters as Frequency Dividers.}

We can visualise a $C$-counter as an electronic circuit that consists of $n$ components (nodes); see \figureref{fig:circuit}. Each node $i$ has a register $x_i$ that stores its current state---one of the values $0,1,\dotsc,s-1$. There is a logical circuit $g$ that maps the current state to the output, and another logical circuit $A_i$ that maps the current states of all nodes to the new state of node $i$. At each rising edge of the clock pulse, register $x_i$ is updated.

\begin{figure}[h]
    \centering
    \includegraphics[page=5]{figs.pdf}
    \caption{A $2$-counter for $n = 2$, viewed as an electronic circuit.}\label{fig:circuit}
\end{figure}

If the clock pulses are synchronised, regardless of the initial states of the registers, after $t$ clock pulses the system has stabilised so that the outputs are synchronised and they are incremented (modulo $C$) at each clock pulse.

In particular, if we have an algorithm for $2$-counting, it can be used as a \emph{frequency divider}: given synchronous clock pulses at rate $1$, it produces synchronous clock pulses at rate $1/2$.

\paragraph{\boldmath From $2$-Counters to $C$-Counters.}

We can compose $b$ layers of $2$-counters to build a clock that counts modulo $2^b$; see \figureref{fig:compose}. A composition of self-stabilising algorithms is self-stabilising~\cite{dolev00self-stabilization}. For the purposes of the analysis, we can wait until layer $i-1$ stabilises, use this as the initial state of layer $i$, and then argue that the nodes on layer $i$ receive a synchronous clock pulse and hence they will eventually stabilise.

\begin{figure}[h]
    \centering
    \includegraphics[page=4]{figs.pdf}
    \caption{Composition of $2$-counters.}\label{fig:compose}
\end{figure}

\paragraph{Counters in Mutual Exclusion.}

With a $C$-counter we can implement \emph{mutual exclusion} and \emph{time division multiple access} in a fairly straightforward manner. If we have $C = n$ nodes and one shared resource (e.g., a transmission medium), we can let node $i$ to access the resource when its own counter has value $i$. Care is needed with the actions of faulty nodes, though---for further information on achieving \emph{fault-tolerant} mutual exclusion, see, e.g., Moscibroda and Oshman \cite{moscibroda11resilience}. Again $2$-counting is of particular interest, as it may be leveraged by more complex mutual exclusion algorithms.

\section{Problem Formulation}\label{sec:model}

We will now formalise the $C$-counting problem and the synthesis problem, and introduce the definitions that we will use in this work. Throughout this work, we will follow the convention that nodes, states, and time steps are indexed from $0$. We use the notation $[k] = \{0, 1, \dots, k-1 \}$.

\paragraph{Simplifications.}

As our focus is primarily on $2$-counters, we will now fix $C = 2$; the definitions are straightforward to generalise.

In prior work, algorithms have made use of a function that maps the internal state $x_i$ of a node to its output $g(x_i)$. However, in this work we synthesise algorithms that do not need any such mapping: for our positive results, an identity mapping is sufficient, and for the negative result, we study the case of $s=2$ which never benefits from a mapping. Hence we will now give a formalisation that omits the output mapping.

\paragraph{Algorithms.}

Fix the following parameters:
\begin{itemize}[noitemsep]
    \item $n$ = the number of nodes,
    \item $f$ = the maximum number of faulty nodes,
    \item $s$ = the number of internal states.
\end{itemize}
An algorithm $\vec A$ specifies a \emph{state transition} function $A_i \colon [s]^n \to [s]$ for each node $i \in [n]$. Here $[s]^n$ is the set of \emph{observed configurations} of the system.

\paragraph{Projections.}

Let $F \subseteq [n]$, $|F| \le f$ be the set of \emph{faulty} nodes. We define the \emph{projection} $\pi_F$ as follows: for any observed configuration $\vec{u} \in [s]^n$, let $\pi_F(\vec{u})$ be a vector $\vec{x}$ such that $x_i = \byz$ if $i \in F$ and $x_i = u_i$ otherwise. For example,
\[
    \pi_{\{2,4\}} ((0,1,0,1,1)) = (0,1,\byz,1,\byz).
\]
This gives us the set $V_F = \pi_F([s]^n)$ of \emph{actual configurations}. Two actual configurations are particularly important:
\[
    \vec 0_F = \pi_F((0,0,\dotsc,0)) \quad \text{and} \quad
    \vec 1_F = \pi_F((1,1,\dotsc,1)).
\]

\paragraph{Executions.}

Let $\vec x, \vec y \in V_F$. We say that configuration $\vec y$ is \emph{reachable} from $\vec x$ if for each non-faulty node $i \notin F$ there exists some observed configuration $\vec{u}_i \in [s]^n$ satisfying $\pi_F(\vec{u}_i) = \vec x$ and $A_i(\vec{u}_i) = y_i$. Intuitively, the faulty nodes can feed such misinformation to node $i$ that it chooses to switch to state $y_i$. We emphasise that $\vec{u}_i$ may be different for each $i$; the misinformation need not be consistent.

An \emph{execution} of an algorithm $\vec A$ for given set of faulty nodes $F$ is an infinite sequence of actual configurations $X = (\vec{x}^0, \vec{x}^1, \vec{x}^2, \dotsc)$ such that $\vec{x}^{r+1}$ is reachable from $\vec{x}^r$ for all $r$.

\paragraph{Stabilisation.}

For an execution $X = (\vec{x}^0, \vec{x}^1, \vec{x}^2, \dotsc)$, define its $t$-tail \[X[t] = (\vec{x}^t, \vec{x}^{t+1}, \vec{x}^{t+2}, \dotsc).\] We say that $X$ \emph{stabilises in time $t$} if one of the following holds:
\[
    X[t] = (\vec 0_F, \vec 1_F, \vec 0_F, \dotsc) \quad \text{or} \quad
    X[t] = (\vec 1_F, \vec 0_F, \vec 1_F, \dotsc).
\]
We say that an algorithm $\vec A$ \emph{stabilises in time $t$} if for any set of faulty nodes $F$ with $|F| \le f$, all executions of $\vec A$ stabilise in time~$t$.

\paragraph{The Synthesis Problem.}

Now that we have formally defined what a 2-counting algorithm is, we can give the definition for the synthesis problem of counting algorithms. First, the decision version of the problem is the \emph{realisability problem}. Given an instance $(n,f,s,t)$, the task is to decide whether there exists a 2-counting algorithm for a network with $n$ nodes satisfying the following properties:
\begin{enumerate}[noitemsep]
    \item the algorithm tolerates $f$ failures,
    \item each node uses at most $s$ states,
    \item the algorithm stabilises in at most $t$ steps.
\end{enumerate}
If such an algorithm exists, we say that the instance $(n, f, s, t)$ is \emph{realisable}. The \emph{synthesis problem} is to output an algorithm $\vec A$ if the instance is realisable or state that no algorithm exists.

\section{Human-Designed Algorithms}\label{sec:human-algs}

Before moving on to computer-designed algorithms using SAT-based techniques, in this section we illustrate a few human-designed algorithms. First, we show that randomisation helps when it comes to designing small-state (but slow) algorithms. This is followed by a deterministic algorithm that solves the counting problem in the general case with a large number of internal states.

\paragraph{Randomised Algorithms.}

We extend our model to randomised algorithms by equipping each node with a \emph{private coin}. Now in a single synchronous round, every node can flip its coin to access one random bit. Thus, node $i$ can decide on its new state using the random bit $b \in \{0,1\}$ and the observed configuration $\vec u \in [s]^n$. In contrast to the randomised algorithm by Dolev and Welch~\cite{dolev04clock-synchronization}, the following algorithm only uses two states.

Let $n \ge 4$, $f < n/3$, and $s=2$. We can solve the 2-counting problem with the algorithm of \figureref{fig:alg-rand}.

\begin{figure}
\algorithm{
    \item If more than $(n+f)/2$ entries in $\vec u$ are 0:
    \block{
        \item Switch to state $1$.
    }
    \item Otherwise, if more than $(n+f)/2$ entries in $\vec u$ are 1:
    \block{
        \item Switch to state $0$.
    }
    \item Otherwise:
    \block{
        \item Flip the coin to get a random bit $b \in \{0,1\}$.
        \item Switch to state $b$.
    }
    \vspace{-3mm}
}
\vspace{-4mm}
\caption{A randomised 2-counting algorithm. All nodes follow the same algorithm.}\label{fig:alg-rand}
\end{figure}

\begin{lemma}
 Let $p$ be the probability that out of $n - f - 1$ coin flips, more than $(n+f)/2 - 1$ are heads. 
 The randomised algorithm solves synchronous 2-counting in $1/p + 1$ rounds in expectation.\label{lemma:expected}
\end{lemma}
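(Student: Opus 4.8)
The plan is to reduce stabilisation to the event of reaching a configuration with a clear ``super-majority'' among the non-faulty nodes, and then to bound the per-round probability of producing such a configuration by $p$. Write $N = n-f$ for the number of non-faulty nodes and $T = (n+f)/2$ for the algorithm's threshold; since $f < n/3$ we have $T < N$. First I would record the two structural facts that make the algorithm work. (i)~\emph{Once synchronised, always synchronised}: if at some round all non-faulty nodes share a value $a$, then every non-faulty node sees at least $N > T$ copies of $a$, so the deterministic rule flips all of them to $1-a$; hence a clean configuration $\vec 0_F$ or $\vec 1_F$ is followed forever by the alternating pattern required by the definition of stabilisation. (ii)~More generally, if the \emph{true} number of non-faulty nodes holding some value exceeds $T$ (a super-majority), then in the next round every non-faulty node is forced to the opposite value, producing a clean configuration. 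Consequently it suffices to bound the expected time until a super-majority configuration first appears; stabilisation then follows one round later.

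Next I would analyse a single step starting from a configuration that is \emph{not} a super-majority. By the $0$--$1$ symmetry of the algorithm I may assume the number $z$ of non-faulty $0$-nodes satisfies $z \le N/2$ (so the number of $1$-nodes is $o = N - z \ge N/2$). The key is to pin down exactly what the adversary can make each non-faulty node do. Because the largest number of $0$'s that any node can be shown is $z + f \le N/2 + f = T$, no node can ever be forced to switch to $1$; and a short computation on the observable counts shows that, for each node, the adversary may either force it to $0$ or hand it a free coin flip, but nothing else. Since the coins are i.i.d.\ and fair, the distribution of the next configuration depends only on the number $\rho$ of nodes the adversary lets flip: the number of fresh $1$-nodes is $H \sim \mathrm{Bin}(\rho, 1/2)$ and the number of $0$-nodes is $N - H$.

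I would then compute the success probability and show it is at least $p$ for every choice of $\rho \in \{0,\dots,N\}$. A super-majority is reached precisely when $H > T$ or $N - H > T$, i.e.\ when $H > T$ or $H < N - T$. Using the symmetry of the fair binomial I would first rewrite the target quantity as $p = \Pr[\mathrm{Bin}(N-1,1/2) > T-1] = \Pr[\mathrm{Bin}(N-1,1/2) < N-T]$. For $\rho \le N-1$ the term $\Pr[H < N-T]$ alone already dominates $p$, because $\Pr[\mathrm{Bin}(\rho,1/2) < N-T]$ is non-increasing in $\rho$ (adding a coin only makes the count stochastically larger). For the remaining case $\rho = N$ both super-majority events must be used together: by binomial symmetry $\Pr[\mathrm{Bin}(N,1/2) > T] = \Pr[\mathrm{Bin}(N,1/2) < N-T]$, and the one-step recursion $\mathrm{Bin}(N,1/2) = \mathrm{Bin}(N-1,1/2) + \mathrm{Bernoulli}(1/2)$ gives $2\Pr[\mathrm{Bin}(N,1/2) < N-T] \ge \Pr[\mathrm{Bin}(N-1,1/2) < N-T] = p$. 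In either case the per-round probability of creating a super-majority is at least $p$, uniformly over all configurations and adversarial strategies.

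Finally I would assemble the bound. Let $\tau$ be the first round in which a super-majority appears. The uniform per-round bound gives $\Pr[\tau > k] \le (1-p)^k$, hence $E[\tau] \le 1/p$, and by fact~(ii) the algorithm has stabilised by round $\tau + 1$, so the expected stabilisation time is at most $1/p + 1$. I expect the main obstacle to be the middle step: correctly characterising the adversary's forcing power (the case analysis of $z$ against $N/2$, $T$, and $N-T$) and then handling the boundary case $\rho = N$ in the binomial estimate, which is exactly where both super-majority events and the symmetry of the fair binomial are needed to recover the constant $p$ rather than something weaker.
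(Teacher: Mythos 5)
Your proof is correct and follows essentially the same route as the paper's: a super-majority among the non-faulty nodes forces stabilisation one round later, each round produces a super-majority with probability at least $p$, and the geometric waiting time gives the bound $1/p+1$. Your middle step (characterising the adversary as ``force to $0$ or allow a flip'' and bounding the binomial via stochastic dominance and symmetry) is just a more detailed rendering of the paper's three-case analysis---in particular your $\rho=N$ case is exactly the paper's ``fix the output of a single non-faulty node'' trick.
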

\begin{proof}
 Observe that no two distinct non-faulty nodes apply rules 1 and 2 during the same round: if a node $i$ sees the value 0 more than $(n+f)/2$ times, then any node $j$ must see value 0 at least $(n-f)/2$ times, and thus, $j$ sees the value 1 fewer than $(n+f)/2$ times. Moreover, if more than $(n+f)/2$ non-faulty nodes have the same output, then the system will stabilise in the next round as all non-faulty nodes switch to the same state.

 Next we argue that with probability at least $p$, more than $(n+f)/2$ non-faulty nodes have the same state. We have three cases. In the first case, at least one non-faulty node applies rule 1. Then in the worst case all other nodes flip their coins, so the system stabilises with probability at least $p$. The second case, where at least one non-faulty node applies rule 2, is symmetrical. Finally, the third case consists of all nodes flipping their coins simultaneously. In this case, fix the output of a single non-faulty node and repeat the analysis of the previous two cases.

 The number of rounds before we stabilise follows a geometric distribution, so in expectation, we get a successful streak of coin flips in $1/p$ rounds and stabilise during the next round. 
\end{proof}

\begin{theorem}
 For all $n \ge 4$ and $f \le n/3$, the expected stabilisation time of the randomised algorithm is bounded by
 $$
 \min \{ 2^{2f+2} + 1, 2^{O(f^2/n)}\}.
 $$
\end{theorem}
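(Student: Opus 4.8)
The plan is to combine \lemmaref{lemma:expected} with two separate lower bounds on the success probability $p$. By \lemmaref{lemma:expected} the expected stabilisation time equals $1/p + 1$, so it suffices to bound $p$ from below. Here $p = \Pr[X > (n+f)/2 - 1]$, where $X$ is the number of heads among $m := n - f - 1$ independent fair coin flips. The mean of $X$ is $m/2 = (n-f-1)/2$, and the cutoff exceeds this mean by $(n+f)/2 - 1 - (n-f-1)/2 = f - \tfrac12$; thus the event defining $p$ asks for a surplus of about $f$ heads over the expectation. This event has positive probability precisely when some integer lies in $\bigl((n+f)/2 - 1,\,m\bigr]$, i.e.\ exactly when $n > 3f$, which holds under $f < n/3$ (and also gives $m \ge 2f$). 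I would establish $p \ge 2^{-2f-1}$, which yields the first term of the minimum, and $p \ge 2^{-O(f^2/n)}$, which yields the second, and then take the better of the two in each parameter range.

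For the first bound I would use a crude conditioning argument. Designate $2f$ of the $m$ flips; the probability that all of them come up heads is exactly $2^{-2f}$. Conditioned on this, the remaining $m' := m - 2f$ flips are still fair and independent, and by the symmetry of the fair binomial the probability that at least half of them are heads is at least $1/2$ (for even $m'$ this follows from $\Pr[Y \ge m'/2] = \tfrac{1 + \Pr[Y = m'/2]}{2} \ge \tfrac12$, and for odd $m'$ from $\Pr[Y > m'/2] = \tfrac12$). On the intersection of these two events $X \ge 2f + \lceil m'/2 \rceil = \lceil (n+f-1)/2 \rceil > (n+f)/2 - 1$, so the cutoff defining $p$ is met. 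Hence $p \ge 2^{-2f}\cdot \tfrac12 = 2^{-2f-1}$, giving $1/p + 1 \le 2^{2f+2}+1$.

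For the second bound I would invoke a reverse Chernoff (anti-concentration) estimate for the binomial tail. Writing $k^\ast = \lceil (n+f-1)/2\rceil$ for the threshold and $q = k^\ast/m = \tfrac12 + \Theta(f/n)$ for the threshold fraction, a single-term Stirling lower bound gives $p \ge \Pr[X = k^\ast] = \binom{m}{k^\ast}2^{-m} \ge \tfrac{c}{\sqrt m}\, e^{-m\,D(q \,\|\, 1/2)}$, where $D(q \,\|\, \tfrac12)$ is the relative entropy. The key analytic fact is that $D(\tfrac12 + \epsilon \,\|\, \tfrac12) = O(\epsilon^2)$ uniformly for $\epsilon \in [0,\tfrac12]$ (the ratio $D(\tfrac12 + \epsilon \,\|\, \tfrac12)/\epsilon^2$ tends to $2$ as $\epsilon \to 0$ and stays bounded up to $\epsilon = \tfrac12$); with $\epsilon = \Theta(f/n)$ and $m = \Theta(n)$ this gives $m\,D(q \,\|\, \tfrac12) = O(f^2/n)$. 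Therefore $1/p \le \sqrt{\pi m}\, e^{O(f^2/n)} = 2^{O(f^2/n) + O(\log n)}$, and in the regime where this bound actually improves on the first --- namely $f = \Omega(\sqrt{n\log n})$, so that $f^2/n = \Omega(\log n)$ --- the polynomial prefactor is absorbed into the exponent, yielding $1/p + 1 = 2^{O(f^2/n)}$.

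Taking the smaller of the two bounds in each parameter range gives the claimed $\min\{2^{2f+2}+1,\,2^{O(f^2/n)}\}$. I expect the main obstacle to be the second bound: obtaining a clean $2^{O(f^2/n)}$ requires a quantitatively explicit reverse Chernoff estimate, controlling the relative-entropy term by $O(\epsilon^2)$ over the \emph{full} range of $\epsilon$ rather than only for small deviations, and verifying that the $O(\sqrt m)$ Stirling prefactor is genuinely lower order precisely where the bound is used. The first bound is elementary once the conditioning is set up, and the parity bookkeeping (the parities of $m$, $m' = m-2f$, and $n+f$) is routine.
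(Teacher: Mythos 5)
Your first bound is sound and is essentially the paper's argument: the paper conditions on $2f+1$ designated heads together with at least half of the remaining $n-3f-2$ flips being heads, obtaining $p \ge 2^{-2f-2}$; your variant with $2f$ designated heads is the same idea (and marginally sharper). The genuine gap is in the second bound. You lower-bound the tail by the single point mass $\Pr[X=k^\ast] \ge \frac{c}{\sqrt m}\,e^{-m D(q\,\|\,1/2)}$, which unavoidably carries a $\Theta(1/\sqrt m)$ prefactor, and you propose to absorb the resulting $2^{O(\log n)}$ factor only in the regime $f = \Omega(\sqrt{n\log n})$, on the grounds that this is ``where the bound actually improves on the first.'' That does not rescue the claim: $E \le \min\{A,B\}$ requires $E \le B$ for \emph{all} admissible $(n,f)$, and the regime you exclude is precisely one where $B$ is the binding term. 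For instance, with $f = \Theta(\sqrt n)$ the claimed bound is $2^{O(f^2/n)} = 2^{O(1)}$, vastly smaller than $2^{2f+2}+1$, yet your estimate yields only $1/p = O(\sqrt n)$.

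No single-term estimate can do better, since $\Pr[X=k^\ast]=\Theta(1/\sqrt m)$ near the mean; one must sum $\Theta(\sqrt m)$ point masses around the threshold, i.e.\ use a reverse Chernoff bound for the \emph{tail} with a constant prefactor. This is exactly what the paper does: it cites the inequality $\Pr[X \ge N/2+t] \ge \frac{1}{15}\exp(-16t^2/N)$ for $t \in [N/8]$ (Feller), applies it with $N=n-f-1$ and $t \approx f$, and handles $f=\Theta(n)$ separately (where the restriction $t \le N/8$ fails, but there the second bound follows trivially from the first since $f^2/n = \Theta(f)$). Replacing your single-term Stirling step with such a tail estimate repairs the proof; your uniform bound $D(\tfrac12+\epsilon\,\|\,\tfrac12)=O(\epsilon^2)$ is correct and would in fact let you avoid the paper's case split on the validity range of the anti-concentration inequality.
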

\begin{proof}
 We bound the probability $p$ in Lemma~\ref{lemma:expected} from which the expected stabilisation time follows.

 For the first bound, it suffices to analyse the event where the first $2f+1$ non-faulty nodes and at least half of the remaining non-faulty nodes all flip head at the same round, as $2f+1 - (n-f-2f-1)/2 > (n+f)/2$. Now observe that the probability of $2f+1$ coin flips all being head is $2^{-2f-1}$ and the probability that at least half of out of $N$ coin flips are head is at least $1/2$. Combining these observations gives us the first bound.

 For the second bound, if $f = \Theta(n)$ then the second bound trivially follows from the first. Suppose $f = o(n)$. We use the fact~\cite{feller43generalization,matousek08probabilistic-lecture} that for any $t \in [N/8]$ 
$$\Pr[X \ge N/2 + t] \ge \frac{1}{15}\exp(-16t^2/N),$$
where $X$ is the number of heads in $N$ coin flips. Setting $N=n-f-1$ and $t=\lfloor (n+f)/2 \rfloor + 1 - N/2$ gives us the desired bound.
\end{proof}

\paragraph{Deterministic Algorithms.}

We can leverage existing deterministic algorithms for binary consensus to come up with synchronous counting algorithms. However, this leads to a large number of states per node.

\pagebreak 

For example, this theorem follows from the results by Dolev and Hoch~\cite{dolev07actions}:
\begin{theorem}
 Let $\vec{A}$ be a deterministic algorithm that solves binary consensus in $R$ rounds for $n$ nodes and $f$ faults. Then there exists a deterministic algorithm $\vec{B}$ that solves synchronous $C$-counting in time $t \in O(R+C)$ for $n$ nodes and $f$ faults.
\end{theorem}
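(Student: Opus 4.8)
The plan is to give a constructive reduction: build $\vec{B}$ by running many copies of the consensus algorithm $\vec{A}$ in a pipeline and using their agreed outputs to drive a shared modular counter. Two things keep this from being a one-line corollary: (i)~binary consensus is a \emph{one-shot} primitive whose guarantees rely on a clean, synchronised start and on undisturbed internal state, whereas $C$-counting must be self-stabilising and recover from an \emph{arbitrary} global configuration; and (ii)~consensus agrees only on a single bit, while we must agree on a value in $[C]$ and keep it incrementing forever.

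First I would dispose of (ii), which is the easy part. To agree on a value in $[C]$ we run $\lceil \log_2 C \rceil$ binary-consensus instances of $\vec{A}$ in parallel, one per bit; since they execute concurrently this multiplies the number of states but still costs only $R$ rounds. By the validity property, if all non-faulty nodes feed in the same target value the joint instance outputs exactly that value, and by agreement the non-faulty outputs always coincide. This yields a primitive that lets the non-faulty nodes agree, in $R$ rounds, on a common element of $[C]$, and in particular preserves any value on which they already agree.

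The heart of the argument is (i), self-stabilisation. Each node maintains a local counter that it advances modulo $C$ every round and continually launches a \emph{fresh}, fully re-initialised consensus instance on dedicated per-instance storage; these instances are scheduled in a pipeline of depth $R$, so that at every round exactly one instance completes and a new one begins. The input a node supplies is its predicted counter value at that instance's completion round, and the completing instance's agreed output is written back into every non-faulty counter. The crucial observation is that any instance \emph{started} at or after the (unknown) initialisation time is a genuine, correctly initialised run of $\vec{A}$; hence once the at most $R$ in-flight ``stale'' instances have drained---after $R$ rounds---every completing instance satisfies agreement. From that point the write-back forces all non-faulty counters onto a common value, and validity guarantees that consecutive instances, fed consistently incremented predictions, preserve this agreement while the local increments advance the value in lock-step. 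Charging $R$ rounds to flush stale instances plus up to $C$ further rounds for the counter to complete a cycle and realign its modular output on the agreed reference gives $t \in O(R+C)$. Composed with \lemmaref{lemma:reduction}, this shows counting and consensus are equivalent up to the $O(C)$ term.

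The step I expect to be the main obstacle is making the pipelining rigorous. I must argue that an instance operating on storage that is overwritten at its start round genuinely inherits the agreement and validity guarantees of $\vec{A}$ even though the surrounding global state is adversarial, and that no pre-initialisation instance can leak corruption into a fresh one. This requires isolating each instance's working memory so that the arbitrary configuration present at time $0$ cannot cross instance boundaries, after which the black-box correctness of $\vec{A}$ applies to each clean instance in isolation; the counting, agreement, and increment properties then follow from validity and agreement as sketched above, and the time bound is immediate. The detailed realisation of this schedule is precisely the content of the Dolev--Hoch construction~\cite{dolev07actions}, so I would present the reduction at the level of this structural argument and defer the low-level instance scheduling to that reference.
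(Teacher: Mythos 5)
The paper offers no proof of this theorem at all: it is introduced with the single remark that it ``follows from the results by Dolev and Hoch''~\cite{dolev07actions}. Your proposal is a faithful high-level sketch of exactly that pipelined-consensus construction and, like the paper, defers the delicate instance-scheduling and state-isolation details to the same reference, so it takes essentially the same approach and is consistent with the paper's treatment.
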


Now we can use any consensus algorithm, such as the phase king algorithm~\cite{berman89consensus}, to get a synchronous counter. The phase king achieves optimal resilience and has $O(f)$ stabilisation time and uses $O(\log f)$ state bits (for keeping track of the current round number) per node. However, the resulting synchronous counter relies on executing $O(f)$ consensus instances in parallel, which yields into a very large state space. We get the following corollary:
\begin{corollary}
 For all $n \ge 4$, $f < n/3$ and $C \ge 2$, there is a deterministic $C$-counting algorithm that stabilises in $t \in O(C+f)$ rounds and uses $s \in 2^{O(\log C + f \log f)}$ states.
\end{corollary}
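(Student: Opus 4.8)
The plan is to obtain the corollary as a direct instantiation of the preceding theorem: choose a concrete optimal-resilience binary consensus algorithm for $\vec{A}$, and then read off the parameters of the counting algorithm $\vec{B}$ that the theorem produces. Concretely, I would take $\vec{A}$ to be the phase king algorithm~\cite{berman89consensus}, which for every $n \ge 4$ with $f < n/3$ is deterministic, tolerates $f$ faults, terminates in $R \in O(f)$ rounds, and stores only the current round number, i.e.\ $O(\log f)$ state bits per node. These properties match the hypothesis of the theorem exactly, so the theorem hands us a deterministic $C$-counting algorithm $\vec{B}$ for the same $n$ and $f$.

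For the stabilisation time, I would simply substitute $R \in O(f)$ into the theorem's bound $t \in O(R + C)$ to get $t \in O(f + C)$, as required. The only step that needs genuine attention is the state count. Here I would follow the structure of the reduction behind the theorem: the algorithm $\vec{B}$ maintains the counter value modulo $C$, which costs $O(\log C)$ bits, and concurrently pipelines consensus instances so that a new agreement completes every round; since each instance lasts $R \in O(f)$ rounds, $O(f)$ instances are live simultaneously. Each such instance is a copy of the phase king and therefore contributes $O(\log f)$ bits, so the instances together account for $O(f \log f)$ bits. Adding the two contributions yields $O(\log C + f \log f)$ state bits per node, hence $s \in 2^{O(\log C + f \log f)}$ states.

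The proof is essentially bookkeeping, so there is no deep obstacle; the point requiring the most care is the accounting of the parallel pipeline---verifying that the number of simultaneously running consensus instances is $O(f)$ (determined by the $O(f)$-round latency of a single instance, and not by $C$), and that the per-instance state is dominated by its $O(\log f)$-bit round counter. Once these facts are pinned down, the corollary follows immediately by substitution, with $C$ entering only through the additive $O(\log C)$ bits needed to store the counter value.
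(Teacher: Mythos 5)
Your proposal matches the paper's own justification: the corollary is obtained by instantiating the preceding theorem with the phase king algorithm ($R \in O(f)$ rounds, $O(\log f)$ state bits for the round counter), noting that the construction runs $O(f)$ consensus instances in parallel, which together with the $O(\log C)$ bits for the counter value gives $2^{O(\log C + f\log f)}$ states and $t \in O(C+f)$. Your accounting of the pipelined instances is exactly the bookkeeping the paper relies on, so the two arguments are essentially identical.
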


This approach is not very attractive, for example, from the perspective of hardware implementations. We will now turn our attention to efficient, deterministic, computer-designed algorithms.

\section{Projection Graphs}\label{sec:projgraph}

Before discussing how to \emph{find} an algorithm (or prove that an algorithm does not exist), let us first explain how we can \emph{verify} that a given algorithm is correct. Here the concept of a \emph{projection graph} is helpful---see \figureref{fig:alg-3-4-1-7-c} in the appendix for an example.

Fix the parameters $s$, $n$, and $f$, and consider a candidate algorithm $\vec A$ that is supposed to solve the $2$-counting problem. For each set $F \subseteq [n]$ of faulty nodes, construct the directed graph $G_F(\vec A) = (V_F, R_F(\vec A))$ as follows.
\begin{enumerate}
    \item The set of nodes $V_F$ is the set of actual configurations.
    \item There is an edge $(\vec u, \vec v) \in R_F(\vec A)$ if configuration $\vec v \in V_F$ is reachable from configuration $\vec u \in V_F$. In general, this may produce self-loops.
\end{enumerate}
Note that the outdegree of each node in $G_F(\vec A)$ is at least $1$. Directed walks in $G_F(\vec A)$ correspond to possible executions of algorithm $\vec A$, for this set $F$ of faulty nodes. To verify the correctness of algorithm $\vec A$, it is sufficient to analyse the projection graphs $G_F$. The following lemmas are straightforward consequences of the definitions.

\begin{lemma}\label{lem:proj}
    Algorithm $\vec A$ stabilises in some time $t$ iff for every $F$, graph $G_F(\vec A)$ contains exactly one directed cycle, $\vec 0_F \mapsto \vec 1_F \mapsto \vec 0_F$.
\end{lemma}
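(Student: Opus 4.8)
The plan is to prove the biconditional in both directions, exploiting the correspondence between executions of $\vec A$ and directed walks in $G_F(\vec A)$ noted just before the statement. The key structural fact I will use repeatedly is that the desired stabilised behaviour, $X[t] = (\vec 0_F, \vec 1_F, \vec 0_F, \dotsc)$ or its shift, corresponds exactly to an infinite walk that eventually traverses the two-cycle $\vec 0_F \mapsto \vec 1_F \mapsto \vec 0_F$ forever. So the whole statement is really about when \emph{every} infinite walk in a finite digraph of outdegree $\ge 1$ must end up on this particular cycle.

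For the ``if'' direction, suppose that for every $F$ the graph $G_F(\vec A)$ contains exactly one directed cycle, namely $\vec 0_F \mapsto \vec 1_F \mapsto \vec 0_F$. I would first observe that any execution is an infinite walk in a finite graph, so it must eventually enter and remain in some strongly connected component that is \emph{terminal} (has no outgoing edges leaving it). Since the outdegree of every node is at least $1$, every terminal component contains a cycle; but the only cycle in the graph is the two-cycle on $\{\vec 0_F, \vec 1_F\}$. Hence every infinite walk is eventually trapped in $\{\vec 0_F, \vec 1_F\}$ and, because the only edges among these two vertices form the alternating two-cycle (there are no self-loops at $\vec 0_F$ or $\vec 1_F$, as those would be additional cycles), the walk must alternate $\vec 0_F, \vec 1_F, \vec 0_F, \dotsc$ from some point on. This is exactly the stabilisation condition for some finite time $t$; taking the maximum of the entry times over the finitely many choices of $F$ gives a uniform $t$, so $\vec A$ stabilises in time $t$.

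For the ``only if'' direction, suppose $\vec A$ stabilises in some time $t$. Fix any $F$. I first argue that the two-cycle is present: the alternating execution starting from $\vec 0_F$ is a valid execution that is already stabilised, so the edges $\vec 0_F \mapsto \vec 1_F$ and $\vec 1_F \mapsto \vec 0_F$ must exist in $R_F(\vec A)$, giving the cycle. To see it is the \emph{only} cycle, suppose toward a contradiction that $G_F(\vec A)$ contained some other directed cycle $C$. Traversing $C$ indefinitely yields an infinite walk, hence a valid execution, but this execution is periodic and never reaches the alternating pattern on $\{\vec 0_F, \vec 1_F\}$ unless $C$ is exactly that two-cycle; so it fails to stabilise in time $t$, contradicting the stabilisation of $\vec A$. (This covers spurious self-loops as well, which are cycles of length one.) Thus $\vec 0_F \mapsto \vec 1_F \mapsto \vec 0_F$ is the unique cycle.

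The main obstacle I anticipate is making the ``eventually trapped in the unique cycle'' argument fully rigorous: in a finite digraph with minimum outdegree $1$, one must verify carefully that every infinite walk reaches a recurrent set and that this recurrent set must coincide with a genuine cycle, rather than merely visiting vertices that lie on a cycle. The cleanest way is to note that the vertices visited infinitely often by the walk form a set in which every vertex has an out-neighbour also visited infinitely often, so this set contains a cycle; by the uniqueness hypothesis that cycle is the two-cycle, and then I must rule out any ``escape'' edges out of $\{\vec 0_F, \vec 1_F\}$, which again follows because such an edge together with a return path would create a second cycle. All of these are short consequences of the definitions, as the statement itself advertises, so beyond this one careful finiteness argument the proof should be routine.
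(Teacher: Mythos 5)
The paper offers no written proof of this lemma---it is one of three statements declared to be ``straightforward consequences of the definitions''---so there is nothing to compare against line by line; your task was effectively to supply the omitted argument, and in substance you have done so correctly. Both directions are sound: ``only if'' via the periodic execution obtained by traversing any putative extra cycle forever, and ``if'' via the set of configurations visited infinitely often. Three points deserve tightening. First, your claim that an infinite walk must eventually remain in a \emph{terminal} strongly connected component is false in general (a walk can circulate forever in a non-terminal component); your own final paragraph replaces this with the correct argument---the infinitely-visited set has minimum outdegree one within itself and hence contains a cycle---so the terminal-SCC phrasing should simply be dropped. Second, the definition of stabilisation demands a single $t$ that works for \emph{all} executions, and ``taking the maximum of the entry times over the finitely many choices of $F$'' presumes that for each fixed $F$ the entry time is bounded over the infinitely many executions. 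This does follow from your hypotheses, but needs one more line: since every cycle of $G_F(\vec A)$ meets $\{\vec 0_F,\vec 1_F\}$, the subgraph induced on $V_F\setminus\{\vec 0_F,\vec 1_F\}$ is acyclic, so every walk reaches $\{\vec 0_F,\vec 1_F\}$ within $|V_F|$ steps, and the absence of escape edges (which you correctly derive from uniqueness of the cycle) then traps it there. Third, in the ``only if'' direction, asserting that ``the alternating execution starting from $\vec 0_F$ is a valid execution'' presupposes the very edges you are trying to establish; instead, take any execution (one exists because every configuration has outdegree at least one) and observe that its stabilised tail already uses both edges $\vec 0_F\mapsto\vec 1_F$ and $\vec 1_F\mapsto\vec 0_F$. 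With these repairs the proof is complete and matches the spirit in which the paper uses the lemma.
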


\begin{lemma}\label{lem:projt}
    Algorithm $\vec A$ stabilises in time $t$ iff the following holds for all~$F$:
    \begin{enumerate}
        \item In $G_F(\vec A)$, the only successor of $\vec 0_F$ is $\vec 1_F$ and vice versa.
        \item In $G_F(\vec A)$, every directed walk of length $t$ reaches node $\vec 0_F$ or $\vec 1_F$.
    \end{enumerate}
\end{lemma}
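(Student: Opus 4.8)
The plan is to prove both implications directly from the definitions, using the stated correspondence between executions and infinite directed walks in $G_F(\vec A)$, together with the fact that every node of $G_F(\vec A)$ has outdegree at least $1$, so that every finite walk extends to an infinite one. Throughout, I fix an arbitrary faulty set $F$ with $|F| \le f$ and reformulate the phrase ``$\vec A$ stabilises in time $t$'' as the statement that for every infinite walk $X = (\vec x^0, \vec x^1, \dotsc)$ in $G_F(\vec A)$, the tail $X[t]$ equals one of the two alternating sequences $(\vec 0_F, \vec 1_F, \vec 0_F, \dotsc)$ or $(\vec 1_F, \vec 0_F, \vec 1_F, \dotsc)$. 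Here I read ``a walk of length $t$ reaches $\vec 0_F$ or $\vec 1_F$'' as: its endpoint $\vec x^t$ lies in $\{\vec 0_F, \vec 1_F\}$.

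For the ``if'' direction I assume conditions 1 and 2 hold and take any infinite walk $X$. Applying condition 2 to the length-$t$ prefix $\vec x^0 \to \dotsb \to \vec x^t$ gives $\vec x^t \in \{\vec 0_F, \vec 1_F\}$. I then argue by induction on $r \ge t$ using condition 1: since the only successor of $\vec 0_F$ is $\vec 1_F$ and vice versa, once the walk sits at $\vec 0_F$ (resp.\ $\vec 1_F$) its next vertex is forced to be $\vec 1_F$ (resp.\ $\vec 0_F$). Hence $X[t]$ alternates and $X$ stabilises in time $t$; as $X$ and $F$ were arbitrary, so does $\vec A$.

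For the ``only if'' direction I assume $\vec A$ stabilises in time $t$. Condition 2 is immediate: any walk of length $t$ extends, by the outdegree bound, to an infinite walk whose tail alternates, so its endpoint $\vec x^t$ lies in $\{\vec 0_F, \vec 1_F\}$. The substantive part is condition 1. Suppose $\vec 0_F$ has a successor $\vec w$; I want $\vec w = \vec 1_F$. I start from any infinite walk, whose alternating tail visits $\vec 0_F$ at some position $r \ge t$, then splice in the edge $\vec 0_F \to \vec w$ at position $r$ and extend arbitrarily to a new infinite walk $X'$. Since $r \ge t$, both positions $r$ and $r+1$ of $X'$ lie in the alternating tail $X'[t]$, and in either alternating sequence a $\vec 0_F$ is always followed by $\vec 1_F$; as position $r+1$ was forced to be $\vec w$, we conclude $\vec w = \vec 1_F$. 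The symmetric argument, exchanging the roles of $\vec 0_F$ and $\vec 1_F$, completes condition 1.

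I expect the rerouting construction for condition 1 in the ``only if'' proof to be the main obstacle: one must exhibit an admissible infinite walk that occupies $\vec 0_F$ at a time no earlier than $t$ and then deviates along the hypothesised edge, and check that this spliced sequence is still a legal walk (which it is, edge by edge, using the outdegree bound for the extension) so that the stabilisation hypothesis applies to it. The remaining steps are routine inductions and direct appeals to the definitions.
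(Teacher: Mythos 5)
Your proof is correct; the paper itself gives no argument for this lemma (it is dismissed as a ``straightforward consequence of the definitions''), and your write-up is exactly the natural expansion of that claim, including the one genuinely non-obvious step --- splicing a hypothesised successor edge of $\vec 0_F$ into a walk at a position $r \ge t$ so that the stabilisation hypothesis constrains it. Nothing further is needed.
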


\begin{lemma}\label{lem:mix}
    Let $\vec A$ be an algorithm. Consider any four configurations $\vec x, \vec u, \vec v, \vec w \in V_F$ with the following properties:
    $(\vec x, \vec u) \in R_F(\vec A)$,
    $(\vec x, \vec v) \in R_F(\vec A)$, and
    $w_i \in \{u_i, v_i\}$ for each $i \notin F$.
    Then $(\vec x, \vec w) \in R_F(\vec A)$.
\end{lemma}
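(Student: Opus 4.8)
The plan is to unwind the definition of reachability and exploit the fact that, in this model, the misinformation fed to distinct non-faulty nodes may be chosen independently. This per-node independence is precisely what makes the "mixing" conclusion hold.

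First I would expand the two hypotheses. Since $(\vec x, \vec u) \in R_F(\vec A)$, the definition of reachability supplies, for every non-faulty node $i \notin F$, an observed configuration $\vec a_i \in [s]^n$ with $\pi_F(\vec a_i) = \vec x$ and $A_i(\vec a_i) = u_i$. Likewise, from $(\vec x, \vec v) \in R_F(\vec A)$ I obtain, for every $i \notin F$, an observed configuration $\vec b_i \in [s]^n$ with $\pi_F(\vec b_i) = \vec x$ and $A_i(\vec b_i) = v_i$.

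Next, for each non-faulty node $i$ I would construct a witness $\vec z_i$ for $w_i$ by a case split on the hypothesis $w_i \in \{u_i, v_i\}$: put $\vec z_i = \vec a_i$ when $w_i = u_i$, and $\vec z_i = \vec b_i$ when $w_i = v_i$ (picking either when both hold). In either case $\pi_F(\vec z_i) = \vec x$, since both $\vec a_i$ and $\vec b_i$ already project to $\vec x$, and $A_i(\vec z_i) = w_i$ by the choice. Having produced, for every $i \notin F$, an observed configuration $\vec z_i$ with $\pi_F(\vec z_i) = \vec x$ and $A_i(\vec z_i) = w_i$, the definition of reachability immediately yields that $\vec w$ is reachable from $\vec x$, i.e. $(\vec x, \vec w) \in R_F(\vec A)$.

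The only point that needs care — and it is what makes the statement true rather than hard — is that reachability quantifies the witness $\vec z_i$ separately for each non-faulty node, with no requirement that the witnesses agree on the faulty coordinates. This is exactly the ``different misinformation to different nodes'' feature of the model, and it is what licenses selecting the $\vec u$-witness or the $\vec v$-witness independently coordinate by coordinate. There is no genuine obstacle beyond keeping this independence in view; in particular, no global consistency condition across nodes ever has to be verified.
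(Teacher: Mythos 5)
Your proof is correct and is precisely the ``straightforward consequence of the definitions'' that the paper alludes to without writing out: unwind reachability to get per-node witnesses for $\vec u$ and $\vec v$, then select the appropriate witness for each non-faulty node independently. You also correctly identify the one point that makes this work, namely that the reachability definition quantifies the observed configuration separately for each non-faulty node.
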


\section{Increasing the Number of Nodes}\label{sec:smalln}

It is not obvious how to use computational techniques to design an algorithm that solves the $2$-counting problem for a fixed $f = 1$ but arbitrary $n \ge 4$. However, as we will show next, we can generalise any algorithm so that it solves the same problem for a larger number of nodes, without any penalty in time or space complexity. Therefore it is sufficient to design an algorithm for the special case of $f = 1$ and $n = 4$. From the perspective of parametrised verification and synthesis, the following lemma can be regarded as a \emph{cut-off} result~\cite{emerson95reasoning,jacobs12parameterized}.

\begin{lemma}
    Fix $n \ge 4$, $f < n/2$, $s \ge 2$, and $t \ge 1$. Assume that $\vec A$ is an algorithm that solves the $2$-counting problem for $n$ nodes, out of which at most $f$ are faulty, with stabilisation time $t$ and with $s$ states per node. Then we can design an algorithm $\vec B$ that solves the $2$-counting problem for $n+1$ nodes, out of which at most $f$ are faulty, with stabilisation time $t$ and with $s$ states per node.\label{lemma:increase}
\end{lemma}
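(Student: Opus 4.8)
The plan is to make the new node (index $n$, so the nodes are $0,\dots,n$) a \emph{passive observer}: it is neither seen by nor influences the original $n$ nodes, which continue to run $\vec A$ verbatim, while node $n$ locks onto the stabilised counter value at the same round $t$. Concretely, for $i\in\{0,\dots,n-1\}$ I set $B_i(\vec u)=A_i(u_0,\dots,u_{n-1})$, so node $i$ simply discards the report $u_n$ of the new node; and for the new node I set $B_n(\vec u)$ equal to the plurality (most frequent value, ties broken arbitrarily) of $A_0(u_0,\dots,u_{n-1}),\dots,A_{n-1}(u_0,\dots,u_{n-1})$. Since every $B_i$ ignores $u_n$, the states of nodes $0,\dots,n-1$ evolve completely independently of node $n$, and each $B_i$ maps into $[s]$.

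First I would verify the subsystem simulation. Fix a faulty set $F'\subseteq\{0,\dots,n\}$ with $|F'|\le f$ and any execution of $\vec B$, and restrict every configuration to the coordinates $[n]$. Using the reachability witnesses of $\vec B$ together with the fact that each $B_i$ discards $u_n$, this restricted sequence is an execution of $\vec A$ for the faulty set $F=F'\cap[n]$, with $|F|\le f$ (indeed $|F|\le f-1$ when $n\in F'$). By hypothesis this restricted execution stabilises in time $t$, so from round $t$ on the non-faulty coordinates in $[n]$ all equal the correct counter value. This already settles the case $n\in F'$: then node $n$ is faulty and its value is irrelevant to the stabilisation of $\vec B$.

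The crux is the case $n\notin F'$, where node $n$ is non-faulty and must hold the correct value from round $t$ onward, including at round $t$ itself — even though its round-$t$ value is computed from the possibly not-yet-stabilised round-$(t-1)$ configuration. The key observation is that any configuration $\vec a$ reachable in $t-1$ rounds has a \emph{unique} successor, equal to $\vec 0_F$ or $\vec 1_F$: extending a length-$(t-1)$ walk to $\vec a$ by any successor gives a length-$t$ walk, which by \lemmaref{lem:projt} must end at $\vec 0_F$ or $\vec 1_F$; and if $\vec a$ had both $\vec 0_F$ and $\vec 1_F$ as successors, \lemmaref{lem:mix} would produce a non-synchronised successor, a contradiction. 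Hence for every non-faulty $i$ and every view $\vec w$ projecting to $\vec a$ we have $A_i(\vec w)=v$, the single stabilised value in $\{0,1\}$ (otherwise $\vec a$ would have a second successor). Because at most $f<n/2$ of the indices in $[n]$ are faulty, at least $n-f>n/2$ of the values $A_0(\vec w),\dots,A_{n-1}(\vec w)$ equal $v$, so their plurality is $v$; thus node $n$ computes exactly $v$ at round $t$, and the identical argument applies at every later round. This matches node $n$ to the counter value for all rounds $\ge t$, completing stabilisation of $\vec B$ in time $t$.

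I expect the main obstacle to be precisely this round-$t$ synchronisation of the new node. A naive ``follow node $n-1$'' rule fails both because node $n-1$ may itself be faulty and because its round-$t$ value depends on unstabilised round-$(t-1)$ data; both difficulties are overcome by the same two ingredients, namely the uniqueness-of-successor property forced by stabilisation in time \emph{exactly} $t$, and the bound $f<n/2$ that lets a plurality over the $A_i$'s extract $v$ robustly. The remaining points to check are routine: that the fault bound is never exceeded under restriction (it is not, since dropping node $n$ only removes at most one fault), and that the arbitrary initial configuration of $\vec B$ restricts to an arbitrary initial configuration of $\vec A$.
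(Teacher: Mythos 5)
Your construction and argument coincide with the paper's: nodes $0,\dots,n-1$ run $\vec A$ while ignoring node $n$, and node $n$ applies the transition functions $A_i$ to its own view and takes the majority/plurality, with correctness resting on the same two ingredients the paper uses — the uniqueness of the successor of any round-$(t-1)$ configuration (via \lemmaref{lem:mix}) and the bound $n-f>f$. The proof is correct and essentially identical to the paper's.
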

\begin{proof}
The claim would be straightforward if we permitted the stabilisation time of $t+1$. However, some care is needed to avoid the loss of one round.

We take the following approach. Let $p$ be a projection that removes the last element from a vector, for example, $p((a,b,c)) = (a,b)$. In algorithm $\vec B$, nodes $i \in [n]$ simply follow algorithm $\vec A$, ignoring node $n$:
\[
    B_i(\vec{u}_i) = A_i(p(\vec{u}_i)).
\]
Node $n$ tries to predict the majority of nodes $0,1,\dotsc,n-1$, i.e., what most of them are going to output after this round:
\begin{itemize}
    \item Assume that node $n$ observes a configuration $\vec{u}_n$. For each $i \in [n]$, define
    $h_i = A_i(p(\vec{u}_n))$. If a majority of the values $h_i$ is $1$, then the new state of node $n$ is also $1$; otherwise it is $0$.
\end{itemize}

To prove that the algorithm is correct, fix a set $F \subseteq [n+1]$ of faulty nodes, with $|F| \le f$. Clearly, all nodes in $[n]\setminus F$ will start counting correctly at the latest in round~$t$. Hence any execution of $\vec B$ with $n \in F$ trivially stabilises within $t$ rounds; so we focus on the case of $F \subseteq [n]$, and merely need to show that also node $n$ counts correctly.

Fix an execution $X=(\vec{x}^0,\vec{x}^1,\ldots)$ of $\vec A$, and a point of time $r \ge t$. Consider the state vector $\vec{x}^{r-1}$. By assumption, $\vec A$ stabilises in time $t$. Hence the successors of $\vec{x}^{r-1}$ in the projection graph must be in $\{\vec 0_F, \vec 1_F\}$.

The key observation is that only one of the configurations $\vec 0_F$ and $\vec 1_F$ can be the successor of $\vec{x}^{r-1}$. Otherwise \lemmaref{lem:mix} would allow us to construct another state that is a successor of $\vec{x}^{r-1}$, contradicting the assumption that $\vec A$ stabilises.

We conclude that for all rounds $r\geq t$ and all nodes $i\in [n]\setminus F$, the value $h_i$ is independent of the states communicated by nodes in $F$. Since the values $h_i$ are identical and $n - f > f$, node $n$ attains the same state as other correct nodes in rounds $r\geq t$.
\end{proof}

\paragraph{Other Network Topologies.}

Next we show that it is relatively straightforward to generalise our small-state algorithms to other network topologies as well---albeit with a slight increase in the stabilisation time. The idea is to have a small core of nodes to initially solve synchronous counting, and from thereon, propagate the solution throughout the network. This approach was originally introduced by Braud-Santoni et al.~\cite{braud-santoni14synthesising}. We now show how this idea can be applied in a large class of graphs.

Consider the following families of graphs $\mathcal{G}(k,m,d)$ for integers $k,m,d>0$. Let $G = (V,E)$ be a graph. We say $G \in \mathcal{G}(k,m,d)$ if there exists a partition $V_0, \dots, V_d$ of the nodes $V$ such that 
\begin{enumerate}
 \item $V_0$ is a $k$-clique.
 \item Each node $i \in V_a$ has at least $m$ neighbours in $V_0, \dots, V_{a-1}$.
\end{enumerate}

Put otherwise, we can characterise $\mathcal{G}(k,m,d)$ using the following game (which is reminiscent of threshold models in the context of influence spreading in social networks). Initially, colour all vertices of graph $G$ white. We pick a clique of $k$ nodes and colour all the nodes black. Now any node with at least $m$ black neighbours switches its own colour black. If after $d$ iterations all nodes are coloured black, then $G \in \mathcal{G}({k,m,d})$. See \figureref{fig:topologies} for examples.

\begin{figure}
    \centering
    \includegraphics[page=6]{figs.pdf}
    \caption{Examples of generalised network topologies. Nodes encompassed within a rectangle form a clique from which the stabilisation propagates throughout the network. Here, $G_1 \in \mathcal{G}(4, 3, 1)$ and $G_2 \in \mathcal{G}(5, 3, 1)$. The partially illustrated graph $G_3 \in \mathcal{G}(4,3,k)$ is a cycle where there are additional edges to all neighbours within distance 3.}\label{fig:topologies}
\end{figure} 

\begin{lemma}
 Assume $\vec A$ is an algorithm that solves synchronous 2-counting in a complete network of $n$ nodes, out of which at most $f$ are faulty, with stabilisation time $t$ and with $s$ states per node. Then for any $G \in \mathcal{G}(n,2f+1,d)$, we can design an algorithm $\vec B$ that solves the synchronous 2-counting in $G$ using $s$ states per node. Moreover, $\vec B$ tolerates $f$ failures and stabilises in time $t+d-1$.
\end{lemma}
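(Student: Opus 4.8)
The plan is to reuse the algorithm $\vec A$ on the clique $V_0$ and let the remaining nodes latch onto the counter by a fault-tolerant majority vote that propagates layer by layer. Concretely, I would build $\vec B$ as follows. A node $i \in V_0$ simply runs $A_i$ on the states of the other clique members (it is adjacent to all of them) and ignores its other neighbours, so the clique behaves exactly like an instance of $\vec A$. For $a \ge 1$ and $i \in V_a$, write $N^-(i) = N(i) \cap (V_0 \cup \dots \cup V_{a-1})$; by definition $|N^-(i)| \ge 2f+1$. Node $i$ looks at the values currently held by $N^-(i)$, takes the value $v$ held by a strict majority of them, and switches to $1-v$. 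The bit-complement is the crucial detail: a plain majority copy would leave each layer one tick behind the previous one and so ruin agreement, whereas complementing keeps every layer in phase with the clique, so that all correct nodes alternate in lockstep. During the transient, before the relevant neighbours are synchronised, the rule's output is irrelevant, so I only need it to behave correctly once the earlier layers show a common binary value.

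For correctness, the clique is handled directly: nodes in $V_0 \setminus F$ see the full $n$-clique, at most $f$ of whose members are faulty, so by hypothesis they stabilise to the common alternating value $v_r \in \{0,1\}$ by round $t$. Propagation is then an induction on $a$. Suppose every correct node in $V_0 \cup \dots \cup V_{a-1}$ shows $v_r$ from some round $r_0$ on. A correct $i \in V_a$ has at least $2f+1$ neighbours in these layers, of which at most $f$ are faulty, so at least $f+1$ correct ones all show $v_r$; since values different from $v_r$ can come only from the $\le f$ faulty neighbours, $v_r$ is held by a strict majority of $N^-(i)$ no matter what the faulty neighbours report. Hence $i$ switches to $1-v_r = v_{r+1}$, i.e.\ it is synchronised from round $r_0+1$. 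Because $V_0$ is synchronised from round $t$ and all layers inherit the same alternating value, layer $V_a$ is synchronised from round $t+a$; this is exactly the single-cycle condition of \lemmaref{lem:proj}, so $\vec B$ solves $2$-counting with $s$ states and tolerates $f$ faults.

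The one delicate point is the exact stabilisation time: the induction above yields $t+d$, whereas the claimed bound is $t+d-1$, so one round must be shaved at the clique/first-layer interface. The natural device is the prediction trick of \lemmaref{lemma:increase}: at round $t-1$ the clique has not yet stabilised, but its successor among correct nodes is already forced to the single value $v_t$, since two distinct successors would contradict stabilisation via \lemmaref{lem:mix}. I would try to let the first layer piggyback on this forced transition rather than wait a full extra round. The obstacle — and what I expect to be the crux — is that an outer node sees only $\ge 2f+1$ of the $n$ clique members, so it cannot simply simulate $\vec A$ to compute $v_t$ the way the added node does in \lemmaref{lemma:increase}; making the one-round saving rigorous therefore requires a careful argument about what a partial, fault-corrupted view of the clique determines about the forced value $v_t$.
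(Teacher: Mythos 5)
Your construction and induction coincide with the paper's: the clique $V_0$ runs $\vec A$ unchanged, each node of $V_a$ ($a\ge 1$) takes the strict majority of its at least $2f+1$ neighbours in earlier layers and outputs the complement, and one shows by induction that each layer falls into agreement one round after the previous one. Your majority argument ($|N^-(i)|-f\ge f+1$ correct neighbours showing $v_r$ always form a strict majority, whatever the $\le f$ faulty ones report) is exactly the paper's inductive step. So up to the bound $t+d$ your proof is complete and is essentially the paper's proof.

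The one place you stop short---shaving the last round to reach $t+d-1$---is precisely the place where the paper's own argument is thinnest. The paper absorbs the missing round into the base case by asserting that $V_0\cup V_1$ has already stabilised at time $t$ ``by \lemmaref{lemma:increase}''. But the rule it states for $V_1$ is the majority-and-complement rule, not the prediction rule of \lemmaref{lemma:increase}, and, as you correctly observe, the prediction rule cannot be run by a node of $V_1$ in general: such a node is only guaranteed $2f+1$ neighbours in the clique, so it cannot evaluate the transition functions $A_i$ (which take all $n$ clique states as input) to compute the forced successor configuration. At round $t-1$ the clique need not exhibit any usable majority at all, so with the rule as written a $V_1$ node cannot be certified correct before round $t+1$. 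In short, the obstacle you flag is genuine, you have not resolved it, and neither has the paper; what both arguments cleanly establish is stabilisation in time $t+d$ (or $t+d-1$ under the additional assumption that every node of $V_1$ is adjacent to all of $V_0$ and runs the prediction rule of \lemmaref{lemma:increase}). Since the lemma is used only qualitatively elsewhere, this off-by-one is harmless, but your accounting of it is the more careful one.
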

\begin{proof}
Let $G \in \mathcal{G}(n,2f+1,d)$ be our network topology. Fix a partition $V_0, \dots, V_d$ where $V_0 = \{1, \dots, n\}$ is a $n$-clique. We construct an algorithm $\vec B$ using the following rules:
 \begin{enumerate}
  \item If $i \in V_0 = K$, then $i$ outputs $A_i(x_1, \dots, x_n)$.
  \item If $i \in V_a$ for some $a > 0$, then node $i$ follows the majority of neighbours in $V_0 \cup \cdots \cup V_{a-1}$. If the majority is has output $y$, then output $1-y$. Otherwise output the current state.
 \end{enumerate}
We argue that at time step $t+r$, all nodes in $V_0 \cup \cdots \cup V_{r+1}$ have stabilised. The case of $r=0$ follows from Lemma~\ref{lemma:increase}. Suppose the claim holds for some $r'$ and consider node $i \in V_{r'+2}$. By the induction assumption and definition of $G$, $i$ has a set $P \subseteq V_0 \cup \cdots \cup V_{r'+1}$ of at least $2f+1$ neighbours.

Now node $i$ sees a majority of more than $f+1$ nodes in $P$ having the same output~$y$. Thus node $i$ outputs $1-y$ and is in agreement with non-faulty nodes in $P$ in the next round. Since there are $d+1$ sets in the partition of $V$, the algorithm stabilises in $t+d-1$ steps.
\end{proof}

It is known that consensus cannot be solved in networks with vertex-connectivity less than $2f+1$~\cite{dolev82byzantine}, and by Lemma~\ref{lemma:reduction}, this result carries over to synchronous 2-counting.

\paragraph{Beyond Synchronous Counting.}

We note that the previous lemmas hold for a larger class of problems as well: if it suffices that a node $v$ simply follows a majority of its neighbours, the generalisation techniques can be applied. These problems include, for example, binary consensus and set agreement~\cite{braud-santoni14synthesising}.

\section{Computer-Designed Algorithms}\label{sec:machine-pos}

In principle, we could now attempt to use a computer to tackle our original problem. By the discussion of \sectionref{sec:smalln}, it suffices to discover an algorithm with the smallest possible $s$ for the special case of $n = 4$ and $f = 1$. We could try increasing values of $s = 2, 3, \dotsc$. Once we have fixed $n$, $f$, and $s$, the problem becomes finite: an algorithm is a lookup table with $\ell = n s^n$ entries, and hence there are $s^\ell$ candidate algorithms to explore. For each candidate algorithm, we could use the projection graph approach of \sectionref{sec:projgraph} to quickly reject any invalid algorithm.

Unfortunately, the search space grows very rapidly and super-exponentially in the parameters $n$, $s$, and $f$. As we will see, there is no algorithm with $n = 4$ and $s = 2$. For $n = 4$ and $s = 3$, we have approximately $10^{154}$ candidates. We use three complementary approaches to tackle the task.

\begin{enumerate}
    \item Reduce (encode) the problem directly to propositional satisfiability and apply SAT solvers.
    \item Instead of directly encoding the problem as SAT, apply a SAT-based iterative counter-example guided abstraction refinement approach, in hope of better coping with the inherent combinatorial explosion.
    \item Narrow down the search space by also considering restricted classes of algorithms.
\end{enumerate}
The first approach is discussed in Section~\ref{sec:sat-synth} and the second approach in Section~\ref{sec:ceg-synth}. We will now describe the third approach, restricting the class of algorithms.

\paragraph{Cyclic Algorithms.}

We will consider two classes of algorithms---general algorithms (without any restrictions) and \emph{cyclic} algorithms. We say that algorithm $\vec A$ is cyclic if
\[
    A_i((x_i, x_{i+1}, \dotsc x_{n-1}, x_0, x_1, \dotsc, x_{i-1})) = A_0((x_0, x_1, \dotsc, x_{n-1}))
\]
for all $i$ and all $\vec x$. That is, a cyclic algorithm is invariant under cyclic renaming of the nodes.

There is no a priori reason to expect that the most efficient algorithms are cyclic. However, cyclic algorithms have many attractive features: for example, in a hardware implementation of a cyclic algorithm we only need to take $n$ copies of identical modules. Furthermore, the search space is considerably smaller: we only need to define transition function $A_0$. For $n = 4$ and $s = 3$, we have approximately $10^{38}$ candidate algorithms.

Cyclic algorithms are also much easier to verify. The projection graphs $G_F(\vec A)$ are isomorphic for all $|F| = 1$ and hence it is sufficient to check one of them.

\paragraph{Results.} 

We now present our main results on the new computer-generated algorithms and refer the discussion on how the results were obtained to Sections~\ref{sec:sat-synth}~and~\ref{sec:ceg-synth}.

The positive results are reported in \tableref{table:algorithms}. The key findings are a cyclic algorithm for $s=3$, $n=4$, and $f=1$, and a non-cyclic algorithm for $s=2$, $n=6$, and $f=1$. The table also gives examples of space-time tradeoffs: we can often obtain faster stabilisation if we use a larger number of states.

For the sake of comparison, we note that the \emph{fastest} deterministic algorithm from prior work~\cite{dolev07actions} stabilises in time $t = 13$ for $f = 1$ and it requires a large state space. Our algorithms achieve the stabilisation time of $t = 5$ for $s = 4$ and $t = 7$ for $s = 3$.

Machine-readable versions of all positive results, together with a Python script that can be used to verify the correctness of the algorithms, are freely available online~\cite{results}. Selected examples of the algorithms are also given in \appendixref{app:alg}. We also provide a compact, computer-checkable proof that shows that there is no algorithm for $s = 2$, $n = 4$, and $f = 1$, together with a verification program~\cite{results}.

\begin{table}
\center
\begin{tabular}{@{}l@{\qquad}c@{\qquad}c@{\qquad}c@{}}
  \toprule
  class & nodes ($n$) & states ($s$) & stabilisation time ($t$) \\
  \midrule
  cyclic & 4 & 3 & 7 \\
        & 5 & 3 & 6 \\
        & 6 & 3 & 3 \\
        & 7 & 2 & 8 \\
        & 8 & 2 & 4 \\
  \midrule
  general & 4 & 4 & 5 \\
          & 5 & 3 & 4 \\
          & 6 & 2 & 6 \\
  \bottomrule
\end{tabular}
\caption{Summary of computer-designed algorithms. The number of nodes $n$ is the smallest network on which the algorithm works and $t$ is the worst-case stabilisation time. }\label{table:algorithms}
\end{table}

\section{Synthesis via Directly Encoding to SAT}\label{sec:sat-synth}

In this section, we describe how to directly encode the synthesis problem into SAT. At a high level, we take the following approach:
\begin{enumerate}
    \item Fix the parameters $s$, $n$, $f$, $t$, and the algorithm family (cyclic or general).
    \item Construct a propositional formula $\varphi$ that is satisfiable iff an algorithm $\vec A$ for the given parameters exists.
    \item Use SAT solvers to find a satisfying assignment $\vec a$ of $\varphi$.
    \item Translate $\vec a$ to an algorithm $\vec A$.
\end{enumerate}

In essence, the formula $\varphi$ encodes the conditions given in \lemmaref{lem:projt} and the SAT solver (implicitly) searches through all algorithms $\vec A$:
\begin{enumerate}
    \item Guess an algorithm $\vec A$ and construct the projection graph $G_F(\vec A)$.
    \item Verify that there are no self-loops in $G_F$.
    \item Verify that the only successor of $\vec 0_F$ is $\vec 1_F$ and vice versa.
    \item For each $d = 1,2,\dotsc,t$, find the subset $B_F(d) \subseteq V_F$ of configurations with the following property: for each $\vec x \in B_F(d)$ there is a directed walk of length $d$ in $G_F$ that starts from $\vec x$ and does not traverse $\vec 0_F$ or $\vec 1_F$. We say that $\vec x \in B_F(d)$ is a \emph{$d$-bad} configuration.
    \item Verify that the set $B_F(t)$ is empty.
\end{enumerate}
For cyclic algorithms, we identify equivalent transitions and add corresponding equivalence constraints into the formula.

In the following, we describe the encoding by giving constraints for a single set $F \subseteq [n]$ of faulty nodes. The final formula is then the conjunction of these constraints over every possible choice of faulty nodes $F$.

\paragraph{Variables.}

Fix $F \subseteq [n]$ and let $\vec u \in [s]^n$, $\vec x, \vec y \in V_F$, $i \in [n]$, $d \in [t]$, and $c \in [s]$. We will use the following variables in the encoding:
\begin{itemize}[noitemsep]
    \item $a({\vec u, i, c})$ is true if $A_i(\vec u) = c$,
    \item $h({\vec x, i, c})$ is true if the adversary can force node $i$ to switch to state $c$ from configuration~$\vec x$,
    \item $e({\vec x, \vec y})$ is true if there exists an edge $(\vec x, \vec y) \in R_F$,
    \item $b({\vec x, d})$ is true if the configuration $\vec x \in B_F(d)$. 
\end{itemize}

\paragraph{Transition Functions.}

The $a$-variables describe the algorithm, that is, the transition function $A_i$ for each node $i$. Since we want each $A_i$ to be a well-defined function, we enforce the following constraints for all $\vec u \in V_F$, $i \in [n]$:
\begin{equation}
 \bigvee_{c \in [s]} a({\vec u, i, c})
\end{equation}
and, for all $c \in [s]$,
\begin{equation}\label{eq:deterministic}
 a({\vec u, i, c}) \rightarrow \big( \bigwedge_{ c' \in [s] \setminus c} \neg a({\vec u, i, c'}) \big).
\end{equation}
Observe that if the constraints given in (\ref{eq:deterministic}) are omitted, then $A_i$ may be a relation: a node may have several possible state transitions from a given observed state. Although one could always post-process each $A_i$ into a function, allowing transition relations instead of function will only help the adversary.

\paragraph{Projections.}

Let $\vec x, \vec y \in V_F$ be configurations. Recall from Section~\ref{sec:model} the definition of reachability. If the actual configuration is $\vec x$, then the adversary can choose any observed configuration from the set
$$
 U(\vec x) = \{ \vec u \in [s]^n \colon \pi_F(\vec u) = \vec x\}
$$
for each non-faulty node. For all $\vec u \in U(\vec x)$, we have
\begin{equation}
 a({\vec u,i,c}) \rightarrow h({\vec x, i, c}),
\end{equation}
declaring that the adversary can force node $i$ to switch to state $c$ from configuration $\vec x$. Now, the $h$-variables imply edges in the projection graph $G_F$:
\begin{equation}
 \bigwedge_{i \in [n] \setminus F} h({\vec x, i, y_i}) \rightarrow e({\vec x, \vec y}).
\end{equation}

\paragraph{Ensuring Counting Behaviour.}

The goal of the algorithm is to eventually stabilise and start oscillating only between the two actual configurations $\vec 0_F$ and $\vec 1_F$. To enforce this, we have the clauses
\begin{equation}
 e({\vec 0_F, \vec 1_F}) \text{ and } e({\vec 1_F, \vec 0_F}) 
\end{equation}
together with
\begin{equation}
 \neg e({\vec 0_F, \vec x})  \text{ and }  \neg e({\vec 1_F, \vec x})
\end{equation}
for all $\vec x \in V_F \setminus \{\vec 0_F, \vec 1_F\}$.

\paragraph{Forbidding Non-Stabilising Walks.}

First, we forbid self-loops in the projection graphs with the unit clause 
\begin{equation}
 \neg e({\vec x, \vec x}) 
\end{equation}
for every $\vec x \in V_F$. To ensure that all configurations but $\vec 0_F$ and $\vec 1_F$ belong to the set $B_F(0)$, we have the clauses
\begin{equation}
 \neg b({\vec 0_F, 0}) \text{ and }  \neg b({\vec 1_F,0}),
\end{equation} 
and, for each $\vec x \in V_F \setminus \{ \vec 0_F, \vec 1_F \}$, the clause
\begin{equation}
 b({\vec x, 0}).
\end{equation}
If a configuration $\vec x$ can reach a $d$-bad configuration $\vec y \in B_F(d)$, then $\vec x$ must be $(d+1)$-bad. This is captured by the clause
\begin{equation}
  \big( e({\vec x, \vec y})  \wedge b({\vec y, d}) \big)  \rightarrow b({\vec x,d+1}) 
\end{equation}
for each $\vec x, \vec y \in V_F$.
Finally, in order for the algorithm to eventually stabilise in the time limit $t$, we require that there are no $t$-bad configurations: 
\begin{equation}
 \neg b({\vec x, t}).
\end{equation}

\paragraph{Extension: Non-Uniform Stabilisation Time.}

It is straightforward to generalise the approach to non-uniform stabilisation time as follows, for some $t_0 < t$:
\begin{itemize}[noitemsep]
    \item if $|F| = 0$, the algorithm stabilises in time $t_0$,
    \item if $|F| = 1$, the algorithm stabilises in time $t$.
\end{itemize}
This can potentially help with the synthesis, by making the search space smaller, and it also helps with the quality of the algorithms.

Many of our algorithms are synthesised with this kind of encoding, with $t_0 = 2$ or $t_0 = 3$. Hence they not only work correctly in the presence of a Byzantine failure, but they also stabilise very quickly if all nodes are non-faulty. See the online supplement~\cite{results} for details.

\section{SAT-Based Counter-Example Guided Search}\label{sec:ceg-synth}

We now describe an alternative approach for synthesising synchronous counting algorithms: a counter-example guided search algorithm. The structure of our algorithm is similar to counter-example guided abstraction refinement techniques for model checking~\cite{clarke03cegar,clarke04sat-cegar} which have previously been successfully applied in various other computationally hard problem domains~\cite{janota12qbf,janota11abstraction,janota10circumscription,wintersteiger12qbf,demoura02lazy,barret02checking,flanagan03proving,dvorak14argumentation,finkbeiner12lazy}. We repeatedly (1)~try to construct an algorithm, (2)~check whether the algorithm is correct, and (3)~if not, then refine the encoding.

On a high-level, the search algorithm tries to guess a synchronous counting algorithm $\vec A$ and then uses a SAT solver to find a \emph{counter-example}, an execution that does not stabilise, for $\vec A$. If one is found, then the counter-example is used to include additional constraints to prune the search space, that is, to rule out at least the found counter-example from the implicit set of remaining algorithm candidates. Otherwise, $\vec A$ must be a correct algorithm.

\subsection{Encoding} 

For this approach, we use a symbolic encoding reminiscent of SAT-encodings for bounded model checking~\cite{biere09bmc}. As we want the SAT solver to verify that no counter-examples exist, we use an encoding where the SAT solver finds (i)~a set $F$ of faulty nodes and (ii)~a bad execution under $F$ for the counting algorithm.

\paragraph{Variables.}

Unlike previously, here we use a bit-wise encoding for the states. Each node has $B = \log(s) $ bits that represent its state. Here an observed configuration $\vec u$ is represented as a bit string of length $nB$; each node has $B$ bits to encode its state in $[s]$. If $s$ is not a power of two, then we add extra constraints that only allow $s$ states to be used.

We now list the variables used in the encoding and their semantics:
\begin{itemize}[itemsep=1pt] 
\item $p(i)$ is true if node $i$ is faulty. In other words, $p(i) = 1$ implies $i \in F$. 
\item $a({\vec u, i, b})$ represents the $b$th bit of the next state of node $i$ when it observes the configuration $\vec u \in \{0,1\}^{nB}$.
\item $u({i, j, b, k})$ is the $b$th bit of node $i$ as observed by node $j$ at timepoint $k$.
\item $z({k})$ and $o({k})$ are true if all non-faulty nodes are in state $0$ or $1$, respectively, at timepoint~$k$. 
\item $z(i,k)$ and $o(i,k)$ are true if $i$ is faulty \emph{or} in state $0$ or $1$, respectively, at timepoint~$k$.
\end{itemize}
We will also use the short-hand $g({i, b, k}) = u(i, i, b, k)$ to represent the $b$th bit of node $i$ at timepoint $k$. Next we define each part of the encoding as a separate formula.

\paragraph{Choosing the Set of Faulty Nodes.}

We now define the subformula $\psi_\text{faulty}$. We want the solver to be able to guess a set $F$ of faulty nodes under which a counter-example exists. To achieve this, we add constraints that force \emph{exactly} $f$ of the $p(i)$ variables to be true. 

In the following let $k \in [f]$, $i \in [n]$ and $j \in [n] \setminus \{0\}$. We will introduce the following variables:
\begin{itemize}[noitemsep]
    \item $p_=(k,i)$ is true if the $k$th faulty node is $i$.
    \item $p_\le(k,i)$ is true if the $k$th faulty node is at most $i$.
\end{itemize}

To enforce the semantics of these variables, we use the following clauses:
\begin{align}
 p_=(k,i) &\rightarrow p_\le(k,i), \\
 p_\le(k,j) &\rightarrow p_=(k,j) \vee p_\le(k,j-1), \\
 p_=(k,j-1) &\rightarrow \big( p_\le(k,j) \wedge \neg p_\le(k, j)), \\
 \neg p_=(k,0) &\rightarrow \neg p_\le(k,0).
\end{align}

To ensure that exactly $f$ faulty nodes will be chosen, we use the following clauses: we enforce that at least one node is designated as the $k$th faulty node with
\begin{equation}
 p_\le(k, n-1),
\end{equation}
and we enforce that there is a strict ordering among the nodes with 
\begin{equation}
 p_=(h-1, i) \rightarrow \neg p_\le(h,i)
\end{equation}
for all $h \in [f] \setminus \{0\}$. Finally, we establish the correspondence to $p(i)$ variables by enforcing
\begin{align}
 \big( p_=(k, i) \rightarrow p(i)\big) \wedge \big( p(i) \rightarrow \bigvee_{k' \in [f]} p_=(k',i) \big).
\end{align}

\paragraph{Trivial Transitions.}

Next, we give clauses that fix the trivial transitions for synchronous counting. The conjunction of these clauses is denoted as $\psi_\text{trivial}$.

Let $\vec 0$ and $\vec 1$ correspond to the observed configuration where all nodes are in state 0 or state 1, respectively. The state $0 \in [s]$ is encoded by a bit-string with all zeros, whereas $1 \in [s]$ is encoded as the 0th bit set to one and all other bits zero. 
Now, for all $i \in [n]$ and $b \in [B] \setminus \{0\}$, we enforce
\begin{equation}
 a(\vec 0, i, 0) \quad \text{ and } \quad \neg a(\vec 0, i, b),
\end{equation}
declaring that after observing configuration $\vec 0$, node $i$ must change its state to $1 \in [s]$. Conversely, from configuration $\vec 1$ we need to transition to state $0$. Thus, for all $b \in [B]$ we have
\begin{equation}
 \neg a(\vec 1, i, b).
\end{equation}

\paragraph{Representing State Transitions.}

Let $k \in [t]$. We now define the subformula $\psi_{k, \text{state}}$ encoding the systems behaviour at time step $k$.

If node $i$ is non-faulty, then the state of node $i$ is observed correctly by all other nodes. This is enforced with
\begin{equation}
 \neg p(i) \rightarrow \big( u(i, j, b, k) \leftrightarrow g(i,b,k) \big)
\end{equation}
for all $i,j \in [n]$ and $b \in [B]$.

For every observable configuration $\vec w \in [s]^n$, 
we introduce an auxiliary variable $d(\vec w, i, k)$ representing that node $i$ observes $\vec w$ at timepoint $k$.
Let $w(i, b)$ denote the $b$th bit of the binary representation of the state of node $i$ in the observed configuration~$\vec w$. 

To enforce the semantics of $d(\vec w, i, k)$,  for every observable configuration $\vec w \in [s]^n$ and every $j \in [n]$ the following constraint is needed:
\begin{equation}\label{eq:observed}
\begin{split}
\lnot d(\vec w, j, k)  \rightarrow
\Biggl (
&\Biggl ( \, \bigvee_{i \in [n],\, b \in [B] \colon w(i,b)=0} u(i, j, b, k) \Biggr ) \, \vee \\
&\Biggl ( \, \bigvee_{i \in [n],\, b \in [B] \colon w(i,b)=1} \lnot u(i, j, b, k) \Biggr ) 
\Biggr )
\end{split}
\end{equation}

The intuition behind \eqref{eq:observed} is that, if $d(\vec w, j, k)$ is false, 
then there must be at least one bit in the bit representation of the state observed by node $j$ at timepoint $k$ that is unequal to the bit representation of $\vec w$.

Finally, the state transitions of the system are enforced by the clauses
\begin{equation}\label{eq:transition}
d(\vec w, i, k-1) \rightarrow \big( g(i, b, k) \leftrightarrow a(\vec w, i, b) \big),
\end{equation}
where $k>0$, $\vec w \in [s]^n$, $i \in [n]$ and $b \in [B]$.
Equation \ref{eq:transition} enforces that if at the previous timepoint we observed state $\vec w$, 
then the state of node $i$ equals the successor state of $\vec w$ as specified by the transition relation of node $i$.

\paragraph{Indicators for Stabilisation.}

Finally, we define the behaviour of the $z$- and $o$-variables; the conjunction of these clauses is the subformula $\psi_{k, \text{indicator}}$. Recall that at timepoint $k$, the variable $z(k)$ is true iff the actual configuration is $\vec 0_F$, and respectively $o(k)$ is true iff the actual configuration is $\vec 1_F$. 
The equivalence is given by clauses which enforce for all $i \in [n], k \in [t]$:
\begin{equation}
 z(k) \rightarrow z(i,k) \quad \text{ and } \quad o(k) \rightarrow o(i,k),
\end{equation}
together with 
\begin{equation}
 \neg z(k) \rightarrow \bigvee_{j \in [n]} \neg z(j,k) \quad \text{ and } \quad \neg o(k) \rightarrow \bigvee_{j \in [n]} \neg o(j,k).
\end{equation}
It remains to describe the clauses that force the semantics of $z(i,k)$ and $o(i,k)$ variables. 
First, if a node $i$ is faulty then both $z(i,k)$ and $o(i,k)$ are forced to true:
\begin{align}
p(i) \rightarrow \big( z(i,k) \wedge o(i,k) \big).
\end{align}
For the $z$-variables, we enforce for all $b \in [B]$ the clauses
\begin{align}
z(i,k) \rightarrow \big( p(i) \vee \neg g(i,b, k) \big)
\end{align}
and the disjunction
\begin{equation}
\neg z(i,k) \rightarrow \bigvee_{b \in [B]} g(i, b, k),
\end{equation}
declaring that $z(i,k)$ is true iff $i$ is faulty or in state $0 \in [s]$. Similarly for the $o$-variables, as state $1 \in [s]$ was encoded as the bit string $10\dots 0$, we declare the following clauses to constrain the $o$-variables:
\begin{align}
o(i,k) &\rightarrow \Big( p(i) \vee \big( g(i, 0, k) \wedge \bigwedge_{b \in [B] \setminus \{0\}} \neg g(i, b, k) \big) \Big)
\end{align}
together with the disjunction
\begin{align}
\neg o(i,k) &\rightarrow \big( \neg g(i, 0, k) \vee \bigvee_{b \in [B] \setminus \{0\}} g(i, b, k) \big).
\end{align}

\paragraph{Combining the Subformulas.}

The counter-example guided search algorithm  incrementally builds a propositional formula to use for both verification and synthesis. In the algorithm description, we will refer to the following formulas:
\begin{align}
\psi_\text{base} = \psi_\text{faulty} \wedge \psi_\text{trivial},
\end{align}
which gives the basis of the encoding, and, for each $k \ge 0$,
\begin{align}
\tau_k = \psi_{k, \text{state}} \wedge \psi_{k, \text{indicator}},
\end{align}
which encodes the unrolling of time.

\subsection{Basic Search Algorithm}

Our search algorithm will iteratively construct a sequence $\Psi_0, \Psi_1, \dots$ of formulas until a stabilising 2-counting algorithm is found. Given a satisfiable formula $\Psi_i$, a satisfying assignment $\rho$ defines the following:
\begin{itemize}
 \item $\vec A(\rho)$: an algorithm defining the $n$ transition functions $A_1, \dots, A_n$,
 \item $F(\rho) \subseteq [n]$: a set of $f$ faulty nodes,
 \item $X(\rho) = ( \vec x^0, \dotsc, \vec x^{k} )$: an execution of $\vec A$ under the set $F(\rho)$ of faulty nodes,
 \item $U(\rho) = \{ \vec u_{ij} : i \in [n] \setminus F(\rho), j \in [k] \}$: the configurations observed by non-faulty nodes.
\end{itemize}
That is, the algorithm $\vec A(\rho)$ is determined by the $a(\cdot)$ variables assigned true in $\rho$, the set $F(\rho)$ by the $p(\cdot)$ variables, and so on.

If an assignment $\rho$ exists, then either $\vec A(\rho)$ is a correct algorithm or $X(\rho)$ gives an execution that violates the specification of synchronous 2-counting. In the latter case, the search algorithm inspects $X(\rho)$ and adds constraints that forbid any other solutions $\rho'$ such that $\vec A(\rho) = \vec A(\rho')$. Of course, a na\"ive approach is to add constraints that explicitly exclude algorithm $\vec A$. However, inspecting the transition functions carefully allows for more frugal constraints that forbid several algorithms, that is, a tighter abstraction refinement.

\begin{figure}
\algorithm{
    \item Let $\Psi \gets \psi_{\textrm{base}} \wedge \tau_0 \wedge \tau_1$.
    \item While $\exists \rho$ such that $\rho \models \Psi \wedge \psi_{\textrm{illegal}}$:
    \block{
        \item Let $\Psi \gets \Psi \wedge \psi_\textrm{forbid}(\rho, 1)$.
    }
    \item Let $\Psi \gets \Psi \wedge \tau_2 \wedge \cdots \wedge \tau_t$.
    \item While $\exists \rho$ such that $\rho \models \Psi$:
    \begin{enumerate}
        \item If $\exists \sigma$ such that
        $\sigma \models \Psi \wedge \Gamma(\rho) \wedge \neg z(t) \wedge \neg o(t)$:
        \block{
            \item Let $\Psi \gets \Psi \wedge \psi_\textrm{forbid}(\sigma, t)$.
        }
        \item Otherwise:
        \block{
            \item Stop and output ``$\vec A(\rho)$ is a correct algorithm''.
        }
    \end{enumerate}
    \item Stop and output ``no algorithm exists''.
    \vspace{-3mm}
}
\vspace{-4mm}
\caption{Basic search algorithm. Steps 2, 4, and 4a resort to a SAT solver to find a satisfying assignment of a given formula.}\label{fig:alg-basic}
\end{figure}

The \emph{basic search algorithm} is given in \figureref{fig:alg-basic}. Step 1 defines the initial formula that acts as a basis for the incremental search. In Step 2, the search algorithm first removes all algorithm candidates that do not correctly oscillate between the $\vec 0_F$ and $\vec 1_F$ states even in the special case when the system starts from either state. The formula $\psi_{\textrm{illegal}}$ is defined as $(z(0) \wedge \neg o(1)) \vee (o(0) \wedge \neg z(1))$, and the formulas $\psi_\textrm{forbid}(\cdot, \cdot)$ are constraints that remove bad algorithms from the search space---we will describe these in detail in \sectionref{ssec:refinement}.

Step 4 asks the SAT solver to guess an algorithm candidate $\vec A(\sigma)$. In Step 4a, the SAT solver is used to find a counter-example to $\vec A(\sigma)$ to see whether it stabilises. If a counter-example is found, then we use the counter-example to add more constraints to prune the search space. Here, the formula $\Gamma(\rho)$ encodes $\vec A(\rho)$ as a conjunction of literals consisting of variables $a(\vec u, i, b)$. Step 4b is reached if no counter-example is found, meaning that $\vec A$ is a correct algorithm for synchronous counting.

Finally, if we reach Step 5, we know that $\Psi$ is unsatisfiable, and hence, there does not exist any correct algorithms for the given parameters.  

\begin{remark}
Note that there exist several possible trade-offs between having a simple search algorithm and speeding up synthesis by introducing problem-specific knowledge into the algorithm and encoding. For example, Step~2 essentially learns \lemmaref{lem:projt}.1 which we could also directly encode into the base formulas. In Step 4, we can introduce $z(0)$ as a conjunct into the formula to make the search for $\vec A(\sigma)$ intuitively easier, and so on. However for clarity of exposition, we will focus on more general algorithmic ideas instead of problem-specific tunings.
\end{remark}

\subsection{Refinement through Counter-Examples}\label{ssec:refinement}

Once the SAT solver finds a counter-example, we need to forbid algorithms that exhibit the incorrect behaviour. Intuitively, we add constraints that force the change of \emph{some} transitions that caused the bad execution. 

Formally, we construct $\psi_\text{forbid}(\sigma, k)$ as follows. Let $\vec x^0, \dots, \vec x^k$ be the execution $X(\sigma)$ and let $\vec u_{ij}$ be the configuration observed by node $i \notin F(\sigma)$ at timepoint $j < k$. The literals responsible for the transitions are divided into two sets, $P^+$ and $P^-$, as follows:
\begin{align*}
 (i,j,b) \in P^+ \quad &\textrm{iff} \quad \sigma[a(\vec u_{ij}, i, b)] = 1 \\
 (i,j,b) \in P^- \quad &\textrm{iff} \quad \sigma[a(\vec u_{ij}, i, b)] = 0.
\end{align*}
Above, $\sigma[x] \in \{0,1,\bot\}$ denotes the value (false, true, unassigned) of variable $x$ in assignment $\sigma$. Now the constraint is 
\begin{equation}
 \psi_\text{forbid}(\sigma, k) = \bigvee_{(i,j,b) \in P^+} a(\vec u_{ij}, i, b) \vee  \bigvee_{(i,j,b) \in P^-} \neg a(\vec u_{ij}, i, b).
\end{equation}
Note that the case $P^+ = P^- = \emptyset$ must be a contradiction, and hence the formula is always non-empty.

\subsection{Improvement: Finding Short Loops}

The constraint can be strengthened when $X(\sigma)$ contains a loop $\vec x^0, \dots, \vec x^h$ for some $h < k$, by then only considering timepoints $j \leq h$ when constructing the sets $P^+$ and $P^-$. Then, instead of stating that some transition must be changed in the entire length-$k$ execution, we state that it suffices to  change something for only $h < k$ of the steps. This results in a shorter disjunction in the constraint.

To this end, we modify Step 4 in the basic search algorithm as shown in \figureref{fig:alg-basic2}. We introduce a new variable $\ell(k)$ which is true iff $\vec x^0 = \vec x^k$. We first find the smallest $k < t$ for $\vec A(\rho)$ such that a bad execution consisting of a length-$k$ loop exists. If no such loop exists, we proceed as before. Otherwise, we use the counter-example consisting of a loop to refine the current abstraction.

\begin{figure}
\begin{framed}
\begin{enumerate}[start=4]
    \item While $\exists \rho$ such that $\rho \models \Psi$:
    \begin{enumerate}
        \item If $\exists k \le t$ and $\exists \sigma$ such that
        $\sigma \models \Psi \wedge \Gamma(\rho) \wedge \ell(k)$:
        \block{
            \item Let $\Psi \gets \Psi \wedge \psi_\text{forbid}(\sigma, k^*)$, where $k^*$ is the smallest such $k$.
        }
        \item Otherwise, if $\exists \sigma$ such that
        $\sigma \models \Psi \wedge \Gamma(\rho) \wedge \neg z(t) \wedge \neg o(t)$:
        \block{
            \item Let $\Psi \gets \Psi \wedge \psi_\textrm{forbid}(\sigma, t)$.
        }
        \item Otherwise:
        \block {
            \item Stop and output ``$\vec A(\rho)$ is a correct algorithm''.
        }
    \end{enumerate}
\vspace{-5mm}
\end{enumerate}
\end{framed}
\vspace{-4mm}
\caption{Finding short loops: modifications to Step 4 of \figureref{fig:alg-basic}.}\label{fig:alg-basic2}
\end{figure}

\subsection{Improvement: Overshooting and Unrolling on Demand}

Usually we are interested in knowing whether there exist \emph{any} stabilising counting algorithm for given parameter values $s$, $n$, and $f$. For this task, we modify the search algorithm so that it can first quickly find some algorithm, possibly with a very long stabilisation time, and then gradually further tightening the stabilisation-time requirement.

\begin{figure}
\algorithm{
    \item Let $\Psi \gets \psi_{\textrm{base}} \wedge \tau_0 \wedge \tau_1$.
    \item While $\exists \rho$ such that $\rho \models \Psi \wedge \psi_{\textrm{illegal}}$:
    \block{
        \item Let $\Psi \gets \Psi \wedge \psi_\textrm{forbid}(\rho, 1)$.
    }
    \item Let $k \gets 1$. 
    \item While $\exists \rho$ such that $\rho \models \Psi \wedge z(0)$:
    \begin{enumerate}
        \item Let $i \gets \min \bigl\{j \le k : \exists \sigma_j \textrm{ such that } \sigma_j \models \Psi \wedge \Gamma(\rho) \wedge \ell(j) \bigr\} \cup \{\infty\}$.
        \item If $i \le k$:
        \block{
            \item Let $\Psi \gets \Psi \wedge \psi_\textrm{forbid}(\sigma_i, i)$.
        }
        \item Otherwise, if $\exists \pi \textrm{ such that } \pi \models \Psi \wedge \Gamma(\rho) \wedge \neg z(k) \wedge \neg o(k)$:
        \block{
            \item If $k < t$:
            \block{
                \item Let $k \gets k+1$ and $\Psi \gets \Psi \wedge \tau_k$,
                \item Resume from Step~4a.
            }
            \item Otherwise:
            \block{
                \item Let $\Psi \gets \Psi \wedge \psi_\textrm{forbid}(\pi, k)$.
            }
        }
        \item Otherwise:
        \block{
            \item Output ``$\vec A(\rho)$ is a correct algorithm that stabilises in $k$ steps'',
            \item Let $k \gets k-1$ and $t \gets k$,
            \item Resume from Step~4b.
        }
    \end{enumerate}
    \item Stop and output: ``no algorithm exists that stabilises in time $t$''.
    \vspace{-2mm}
}
\vspace{-4mm}
\caption{Overshooting algorithm.}\label{fig:alg-overshoot}
\end{figure}

The \emph{overshooting algorithm} is given in \figureref{fig:alg-overshoot}. It unrolls the encoding on demand. By setting $t = \infty$, the algorithm tries to find \emph{any} algorithm that stabilises. Of course, as the state space is finite, there is also a finite upper bound on $t$ that can be used here.

The algorithm works as follows. Step 4a searches for the smallest $i$ such that a $i$-loop counter-example exists for $\vec A(\rho)$. In Step 4b, if we have already unrolled the execution to at least $i$ steps, then we add new constraints. Otherwise, Step 4c attempts to find a counter-example $\pi$ of length $k$. If $k < t$, then we unroll the encoding for one additional time step, as it may be that our current time bound $k$ is too small for a stabilising algorithm to exist. Otherwise, we prune the search space using the counter-example $\pi$.

\section{Empirical Results}\label{sec:comparison-results}

So far we have introduced two different approaches for constructing synchronous counting algorithms. Now the obvious question remains: \emph{which one is better?} To answer this, we empirically compared the direct encoding given in \sectionref{sec:sat-synth} against the counter-example guided algorithm described in \sectionref{sec:ceg-synth}. In particular, our goal was to find out which method is more useful in practice when one wants to synthesise new algorithms.

\paragraph{Solvers.}

For solving instances via the direct propositional encoding, we used two freely available 
state-of-the-art complete SAT solvers: {\sc MiniSAT}~\cite{een04minisat} (version 2.2.0 with simplifications) and {\sc lingeling} (version \texttt{ayv})~\cite{biere14lingeling}. The input formula was encoded in the standard DIMACS CNF file format. As both solvers allow a wide range of different parameters to fine-tune the solver search routines, we settled on running both solver using their respective default parameters.

Our implementation of the counter-example guided search, dubbed as {\sc symsync}, builds on top of the incremental interface of the {\sc MiniSAT} solver~\cite{een04temporal}. We used the overshooting variant of the counter-example guided search. Thus, the solver relaxes the time bound when it does not find a correct algorithm matching the target stabilisation time, but after finding some stabilising algorithm, the solver will then gradually tighten the time bound.

\paragraph{Experiment Setup.}

Recall that an instance of the synthesis problem consists of the class of algorithms (general or cyclic) and four parameters: number of nodes $n$, faulty nodes $f$, states $s$, and the stabilisation time $t$. We chose a set of problem instances consisting of both realisable (an algorithm exists) and unrealisable (no algorithm exists) instances, as listed in Table~\ref{table:instances}. We attempted to choose instances of various difficulty, but still solvable within a four hour limit on CPU time; we note that some of the algorithms presented in Table~\ref{table:algorithms} of \sectionref{sec:machine-pos} required considerably longer time to synthesise.

\begin{table}
\center
\begin{tabular}{@{}l@{\qquad}c@{\qquad}c@{\qquad}c@{\qquad}c@{\qquad}c@{\qquad}c@{\qquad}c@{}}
  \toprule
  class & $n$ & $s$ & $t$ & realisable? & $\log_{10}$ of \#candidates \\
  \midrule
  cyclic & 4 & 3 & 6 & no & 38 \\
         & 7 & 2 & 3 & no & 38  \\ 
         & 8 & 2 & 3 & no & 77 \\
  \midrule
         & 4 & 3 & 7 & yes & 38 \\
         & 5 & 3 & 6 & yes & 115 \\
         & 6 & 3 & 3 & yes & 347 \\
         & 7 & 2 & 8 & yes & 38 \\
         & 8 & 2 & 4 & yes &77  \\
  \midrule
  general & 4 & 3 & 7 & yes & 154 \\
          & 5 & 2 & 79 & no & 48 \\
          & 5 & 3 & 4 & yes & 579 \\
          & 6 & 2 & 6 & yes & 115 \\
          & 7 & 2 & 8 & yes & 269 \\
  \bottomrule
\end{tabular}
\caption{Problem instances used in the empirical experiments. For all realisable instances, we also run the experiments for relaxed instances with stabilisation time $t+1$, $2t$, and the maximum possible stabilisation time. The last column gives the $\log_{10}$ of the number of algorithm candidates. }\label{table:instances}
\end{table}

For each problem instance, we ran $N=100$ copies of each of the three solvers, initialising every process with a different random seed. We recorded the running time, the maximum memory footprint, and other statistics for each process. When using the direct encoding, we did not include the time required to generate the instance. The experiments were run on a computing cluster with Intel Xeon X5650 2.67-GHz processors. Each process was single-threaded and the memory limit was set to 8 GB. 

For each realisable problem instance listed in Table~\ref{table:instances}, we also ran the same experiment setup as above for relaxed instances by increasing the stabilisation time bound in three ways: increasing the stabilisation bound by one, doubling the bound, and finally using the maximal bound of $t=s^{n-f}-2$ time steps. Intuitively, suboptimal algorithms with a longer stabilisation time should be more common, and hence, perhaps easier to find. However, this also increases the size of the search space and the size of the SAT instances.

\begin{table} 
\center
\begin{tabular}{@{}l@{\quad}c@{\enspace}c@{\enspace}r@{\qquad}r@{\enspace}r@{\enspace}r@{\qquad}r@{\enspace}r@{\enspace}r@{\qquad}r@{\enspace}r@{\enspace}r@{}}
  \toprule
  \multicolumn{4}{@{}c@{\qquad}}{} &
  \multicolumn{9}{@{}c@{}}{Running time (seconds)} \\
  \cmidrule{5-13}
  \multicolumn{4}{@{}c@{\qquad}}{Instance} &
  \multicolumn{3}{@{}c@{\qquad}}{\sc MiniSAT} &
  \multicolumn{3}{@{}c@{\qquad}}{\sc lingeling} &
  \multicolumn{3}{@{}c@{}}{\sc symsync} \\
  \cmidrule(r{2em}){1-4}
  \cmidrule(r{2em}){5-7}
  \cmidrule(r{2em}){8-10}
  \cmidrule{11-13}
  class & $n$ & $s$ & $t$ & 10\% & 50\% & 90\% & 10\% & 50\% & 90\% & 10\% & 50\% & 90\% \\
  \midrule
cyclic & 4 & 3 &7 & 1 & 1 & 1 & 1 & 1 & 1 & 1 & 2 & 6 \\
 & & &8 & 1 & 1 & 3 & 1 & 1 & 1 & 1 & 1 & 5 \\
 & & &14 & 1 & 1 & 1 & 1 & 1 & 1 & 1 & 1 & 4 \\
 & & &25 & 1 & 1 & 2 & 1 & 1 & 1 & 1 & 1 & 4 \\
\cmidrule{2-13}
 & 5 & 3 &6 & 2373 & --- & --- & \bftab 803 & 2715 & --- & --- & --- & --- \\
 & & &7 & 1477 & 13305 & --- & \bftab 44 & 632 & 711 & $\star$ & --- & --- \\
 & & &12 & 25 & 436 & 3009 & 12 & 16 & 91 & \bftab 5 & 31 & 1014 \\
 & & &79 & 66 & 672 & 4180 & 114 & 167 & 441 & \bftab 3 & 18 & 42 \\
\cmidrule{2-13}
 & 6 & 3 &3 & 79 & 3634 & --- & \bftab 16 & 22 & 70 & --- & --- & --- \\
 & & &4 & $\star$ & --- & --- & \bftab 178 & 272 & 3734 & --- & --- & --- \\
 & & &6 & 2053 & --- & --- & \bftab 251 & 2451 & 4344 & $\star$ & --- & --- \\
 & & &241 & 6930 & --- & --- & 1981 & 2735 & --- & \bftab 41 & 505 & --- \\
\cmidrule{2-13}
 & 7 & 2 &8 & \bftab 34 & 604 & 4177 & 65 & --- & --- & $\star$ & --- & --- \\
 & & &9 & 32 & 560 & 2356 & \bftab 21 & 26 & 101 & 5233 & --- & --- \\
 & & &16 & 16 & 102 & 661 & 18 & 72 & 79 & \bftab 2 & 20 & 84 \\
 & & &62 & 41 & 442 & 1921 & 60 & 185 & 267 & \bftab 2 & 5 & 35 \\
\cmidrule{2-13}
 & 8 & 2 &4 & \bftab 7 & 101 & 440 & 19 & 67 & 81 & --- & --- & --- \\
 & & &5 & \bftab 15 & 119 & 797 & 28 & 56 & 83 & --- & --- & --- \\
 & & &8 & 62 & 558 & 3000 & \bftab 50 & 56 & 216 & 622 & 7304 & --- \\
 & & &126 & 850 & 4117 & --- & 967 & 3945 & 7993 & \bftab 9 & 21 & 145 \\
\midrule
general & 4 & 3 &7 & 10859 & --- & --- & \bftab 4246 & --- & --- & --- & --- & --- \\
 & & &8 & 2639 & --- & --- & \bftab 497 & --- & --- & --- & --- & --- \\
 & & &14 & \bftab 2884 & --- & --- & 3211 & --- & --- & --- & --- & --- \\
 & & &25 & \bftab 2600 & --- & --- & 13639 & --- & --- & --- & --- & --- \\
\cmidrule{2-13}
 & 5 & 3 &4 & $\star$ & --- & --- & $\star$ & --- & --- & --- & --- & --- \\
 & & &5 & $\star$ & --- & --- & $\star$ & --- & --- & --- & --- & --- \\
 & & &8 & $\star$ & --- & --- & $\star$ & --- & --- & --- & --- & --- \\
 & & &79 & --- & --- & --- & --- & --- & --- & --- & --- & --- \\
\cmidrule{2-13}
 & 6 & 2 &6 & --- & --- & --- & --- & --- & --- & 1167 & --- & --- \\
 & & &7 & --- & --- & --- & --- & --- & --- & 541 & --- & --- \\
 & & &12 & $\star$ & --- & --- & $\star$ & --- & --- & 69 & 1782 & --- \\
 & & &30 & $\star$ & --- & --- & $\star$ & --- & --- & 46 & 382 & 2069 \\
\cmidrule{2-13}
 & 7 & 2 &8 & --- & --- & --- & --- & --- & --- & 528 & --- & --- \\
 & & &9 & --- & --- & --- & --- & --- & --- & 354 & 8990 & --- \\
 & & &16 & --- & --- & --- & --- & --- & --- & 111 & 946 & --- \\
 & & &62 & --- & --- & --- & --- & --- & --- & 75 & 415 & --- \\
 \bottomrule
\end{tabular}
\caption{Summary of \emph{realisable} problem instances. The solver columns indicate the first, fifth (median), and ninth decile of running times in seconds. Columns marked with $\star$ indicate that a solution was found by some but less than 10\% of the processes. For the first decile we have highlighted the running time of the fastest solver. Here $f=1$ for all cases.}\label{table:realisable}
\end{table}

\begin{figure}
    \centering%
    \includegraphics{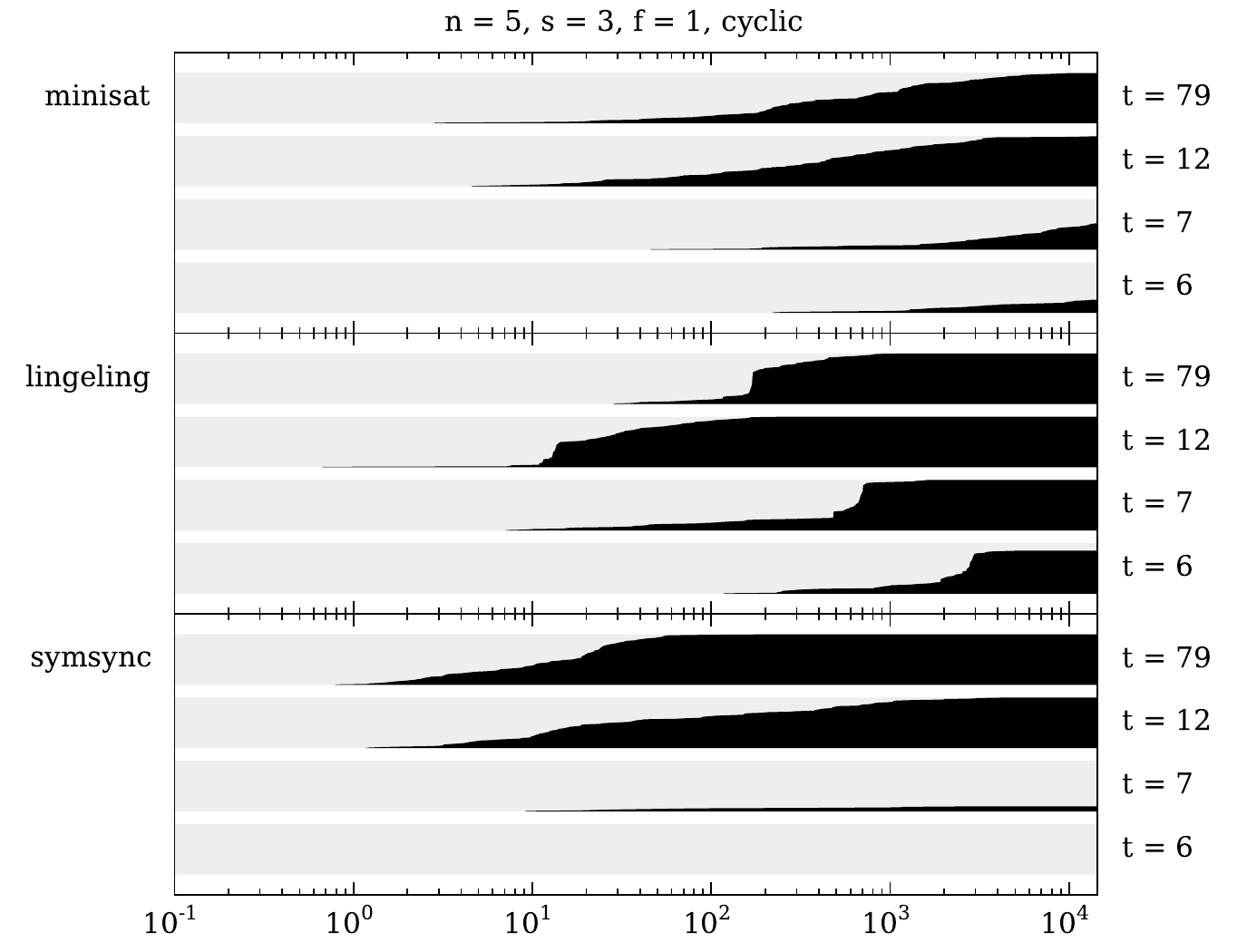}\\[5mm]%
    \includegraphics{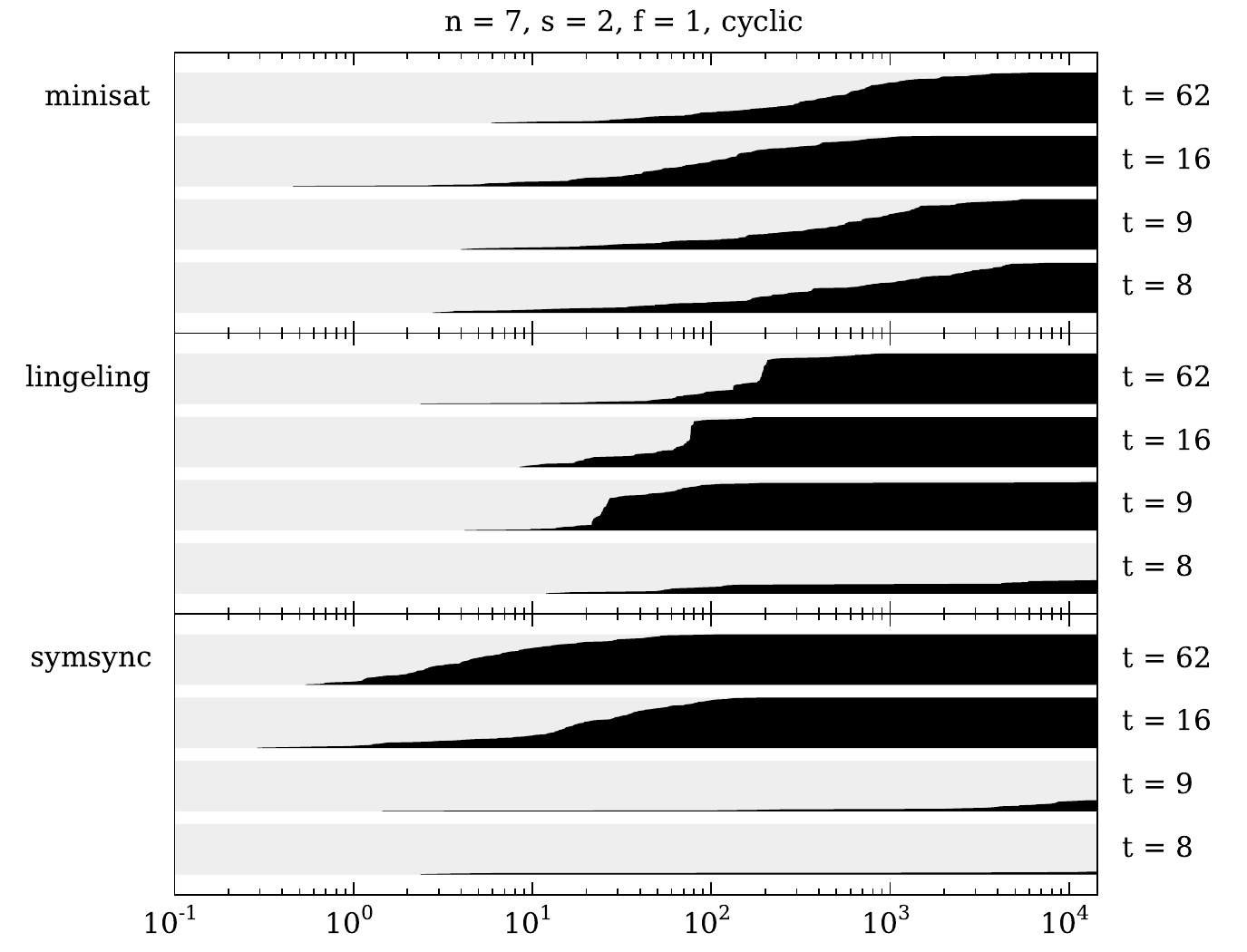}%
    \caption{Example of synthesis times. The $x$ axis is the logarithm of time in seconds and the $y$-axis is the fraction of processes that had solved the problem instance.}\label{fig:plots}
\end{figure}

\paragraph{Results.}

The synthesis times for realisable instance are summarised in Table~\ref{table:realisable} and Figure~\ref{fig:plots}. For each solver, the table gives the median together with first and ninth decile of synthesis times (in seconds). The time to generate the propositional formula for direct encoding instances is not included in the running times of {\sc MiniSAT} and {\sc lingeling} solvers, but is for {\sc symsync} solver, as it iteratively generates its internal encoding within the CEGAR loop  during execution.

The immediate observation is that neither direct encoding or the CEGAR approach consistently outperform the other. However, it is easy to see some patterns. First, the direct encoding works well for finding \emph{optimal} or nearly-optimal algorithms, but finding \emph{some} algorithm is much faster with {\sc symsync}. On the other hand, {\sc symsync} rarely manages to find optimal algorithms within the time limit of four hours or the memory limit of eight gigabytes.

Typically, when the solvers failed to find a solution, this was due to hitting the time limit. The only notable exceptions to this were the instances for general algorithms with $n=5$ and $s=3$, where each {\sc symsync} instance ran out of memory in each case, and the cyclic instances with $n=6$ and $s=3$, where most of the failures were caused by running out of memory. Neither {\sc MiniSAT} nor {\sc lingeling} ran out of memory in these experiments.

The second pattern is that in many cases {\sc symsync} gives solutions to instances with $s=2$ states at least an order of magnitude faster than the direct encoding approach. For general algorithms with $n \in \{6,7\}$, the direct encoding approach does not even produce results in the given time limit.

Indeed the observed behaviour is expected. The {\sc symsync} solver refines the abstraction and relaxes the time bound if a fast algorithm is not found steadily increasing the size of the encoding. Usually, some algorithm will be encountered, and from thereon, the solver will simply proceed by adding new constraints until an algorithm with the desired time bound is found. On the other hand, trying to find \emph{some} algorithm using the direct encoding amounts to simply increasing the time bound to a large enough value right from the start---this greatly increases the size of the propositional formula making the search slower.

When comparing the two different SAT solvers used in the direct encoding approach, rather unsurprisingly, the actively developed {\sc lingeling} solver outperforms {\sc MiniSAT}. We suspect that {\sc lingeling} greatly benefits from its inprocessing capabilities, which are not present in the other solvers.

The results for unrealisable instance are listed in Table~\ref{table:unrealisable}. For unrealisable instances, it is relatively clear that the direct encoding outperforms the counter-example guided approach, although {\sc symsync} is able to prove the non-existence of a two-state algorithm for $n=5$ nodes in time comparable to the direct encoding approach.

\begin{table}
\center
\begin{tabular}{@{}l@{\quad}c@{\enspace}c@{\enspace}r@{\qquad}r@{\enspace}r@{\enspace}r@{\qquad}r@{\enspace}r@{\enspace}r@{\qquad}r@{\enspace}r@{\enspace}r@{}}
  \toprule
  \multicolumn{4}{@{}c@{\qquad}}{} &
  \multicolumn{9}{@{}c@{}}{Running time (seconds)} \\
  \cmidrule{5-13}
  \multicolumn{4}{@{}c@{\qquad}}{Instance} &
  \multicolumn{3}{@{}c@{\qquad}}{\sc MiniSAT} &
  \multicolumn{3}{@{}c@{\qquad}}{\sc lingeling} &
  \multicolumn{3}{@{}c@{}}{\sc symsync} \\
  \cmidrule(r{2em}){1-4}
  \cmidrule(r{2em}){5-7}
  \cmidrule(r{2em}){8-10}
  \cmidrule{11-13}
  class & $n$ & $s$ & $t$ & 10\% & 50\% & 90\% & 10\% & 50\% & 90\% & 10\% & 50\% & 90 \% \\
  \midrule
cyclic & 4 & 3 &6 & 2 & 3 & 3 & 4 & 6 & 6 & --- & --- & --- \\
 & 7 & 2 &7 & --- & --- & --- & --- & --- & --- & --- & --- & --- \\
 & 8 & 2 &3 & 9405 & 13809 & --- & 999 & 1364 & 1612 & --- & --- & --- \\
\midrule
general & 5 & 2 &79 & 1148 & 1502 & 2016 & 1563 & 2353 & 2927 & 2482 & 2780 & 3421 \\
  \bottomrule
\end{tabular}
\caption{Summary of \emph{unrealisable} problem instances. }\label{table:unrealisable}
\end{table}

\section{Conclusions}

In this work, we have used computational techniques to study the synchronous counting problem. At first sight the problem is not well-suited for computational algorithm design---we need to reason about stabilisation from any given starting configuration, for any adversarial behaviour, in a system with arbitrarily many nodes. Nevertheless, we have demonstrated that computational techniques can be used in this context to discover novel algorithms.

Our algorithms outperform the best human-designed algorithms: they are deterministic, small ($2 \le s \le 3$), fast ($3 \le t \le 8$), and easy to implement in hardware or in software---a small lookup table suffices. In summary, our work leaves very little room for improvement in the case of $f = 1$. The general case of $f > 1$ is left for future work; we are optimistic that the algorithms designed in this work can be used as subroutines to construct algorithms that tolerate a larger number of failures.

We presented two complementary approaches for algorithm synthesis: the direct SAT encoding from \sectionref{sec:sat-synth} and the SAT-based CEGAR approach from \sectionref{sec:ceg-synth}. In our experiments, the direct encoding was typically the fastest method for finding \emph{optimal} algorithms, while the CEGAR approach quickly discovered \emph{some} algorithms.

Even though our computer-generated algorithms are constructed with a fairly complicated tool\-chain, the end results are compact, machine readable, and easy to verify with a straightforward script. All results and the verification tools are freely available online~\cite{results}.

\section*{Acknowledgements}

This work is an extended and revised version of a preliminary conference report~\cite{dolev13counting}. We thank Josef Widder and Igor Konnov for helpful suggestions, and Nicolas Braud-Santoni, Aristides Gionis, Tomi Janhunen, Jussi Rintanen and Ulrich Schmid for discussions.

DD: Danny Dolev is Incumbent of the Berthold Badler Chair in Computer Science. This research project was supported in part by The Israeli Centers of Research Excellence (I-CORE) program, (Center  No. 4/11), by grant 3/9778 of the Israeli Ministry of Science and Technology, and by the ISG (Israeli Smart Grid) Consortium, administered by the office of the Chief Scientist of the Israeli Ministry of Industry and Trade and Labor.

MJ: Work supported by Academy of Finland under grants 251170 COIN Centre of Excellence in Computational Inference Research, 276412, and 284591.

JHK, JR, JS: This work was supported in part by the Helsinki Doctoral Programme in Computer Science -- Advanced Computing and Intelligent Systems, by the Academy of Finland (grants 132380 and 252018), and by the Research Funds of the University of Helsinki. Part of this work was done while JR and JS were affiliated with the University of Helsinki.

CL: This material is based upon work supported by the National Science Foundation under Grant Nos.\ CCF-AF-0937274, CNS-1035199, 0939370-CCF and CCF-1217506, the AFOSR under Award number FA9550-13-1-0042, and the German Research Foundation (DFG, reference number Le 3107/1-1).

Computer resources were provided by the Aalto University School of Science ``Science-IT'' project, and by the Department of Computer Science at the University of Helsinki.

\bibliographystyle{plain}
\bibliography{counting}

\newpage
\appendix
\section{Algorithm Listings}\label{app:alg}

In this appendix, we give two examples of our algorithms---machine-readable versions of all algorithms, verification code, and some illustrations are available online~\cite{results}.

\tableref{tab:alg-3-4-1-7-c} gives a cyclic algorithm for $n=4$. The rows are labelled with $(x_0,x_1)$, the columns are labelled with $(x_2,x_3)$, and the values indicate $A_0((x_0,x_1,x_2,x_3))$, that is, the new state of the first node in the observed configuration $\vec x$. The projection graph (\sectionref{sec:projgraph}) for this algorithm is given in \figureref{fig:alg-3-4-1-7-c}.

\tableref{tab:alg-2-6-1-6} shows a non-cyclic algorithm for $n=6$. Again, the rows are labelled with the first half $(x_0,x_1,x_2)$ of the observed state $\vec x$ and the columns are labelled with the second half $(x_3,x_4,x_5)$ of the observed state $\vec x$. The values show the new state for each node: $A_0(\vec x), A_1(\vec x), \dotsc, A_5(\vec x)$.

\begin{table}[h]
\centering
\begin{tabular}{@{}l@{\qquad}c@{\quad}c@{\quad}c@{\quad}c@{\quad}c@{\quad}c@{\quad}c@{\quad}c@{\quad}c@{}}
\toprule
 & 00 & 01 & 02 & 10 & 11 & 12 & 20 & 21 & 22 \\
\midrule
00 & 1 & 1 & 1 & 1 & 0 & 1 & 1 & 1 & 1 \\
01 & 1 & 1 & 1 & 2 & 2 & 0 & 1 & 1 & 1 \\
02 & 1 & 1 & 1 & 1 & 0 & 1 & 1 & 1 & 1 \\
10 & 1 & 0 & 1 & 1 & 0 & 0 & 1 & 0 & 1 \\
11 & 0 & 0 & 0 & 0 & 0 & 0 & 0 & 0 & 0 \\
12 & 1 & 0 & 1 & 0 & 0 & 0 & 0 & 0 & 0 \\
20 & 1 & 1 & 1 & 1 & 1 & 0 & 1 & 1 & 1 \\
21 & 1 & 1 & 1 & 1 & 0 & 0 & 1 & 0 & 0 \\
22 & 1 & 1 & 1 & 1 & 0 & 0 & 1 & 0 & 1 \\
\bottomrule
\end{tabular}
\caption{Cyclic algorithm for $s = 3$, $n = 4$, $f = 1$, and $t = 7$.}\label{tab:alg-3-4-1-7-c}
\end{table}

\begin{table}[h]
\centering
\begin{tabular}{@{}l@{\qquad}c@{\quad}c@{\quad}c@{\quad}c@{\quad}c@{\quad}c@{\quad}c@{\quad}c@{}}
\toprule
 & 000 & 001 & 010 & 011 & 100 & 101 & 110 & 111 \\
\midrule
000 & 111111 & 111111 & 111111 & 111111 & 111111 & 111111 & 111111 & 011000 \\
001 & 111111 & 111111 & 111111 & 111011 & 111011 & 111011 & 010001 & 010000 \\
010 & 111111 & 111111 & 111111 & 101001 & 111111 & 101001 & 011111 & 001000 \\
011 & 111111 & 111011 & 101001 & 100000 & 100001 & 100000 & 000001 & 000000 \\
100 & 111111 & 111111 & 111111 & 110110 & 111111 & 110110 & 011111 & 000000 \\
101 & 111111 & 111111 & 110110 & 110110 & 110110 & 110110 & 010000 & 000000 \\
110 & 011111 & 110110 & 011111 & 000000 & 011111 & 000000 & 011111 & 001000 \\
111 & 010110 & 010110 & 000000 & 000000 & 000010 & 000000 & 000001 & 000000 \\
\bottomrule
\end{tabular}
\caption{Algorithm for $s = 2$, $n = 6$, $f = 1$, and $t = 6$.}\label{tab:alg-2-6-1-6}
\end{table}

\begin{figure}
    \centering
    \includegraphics[page=1]{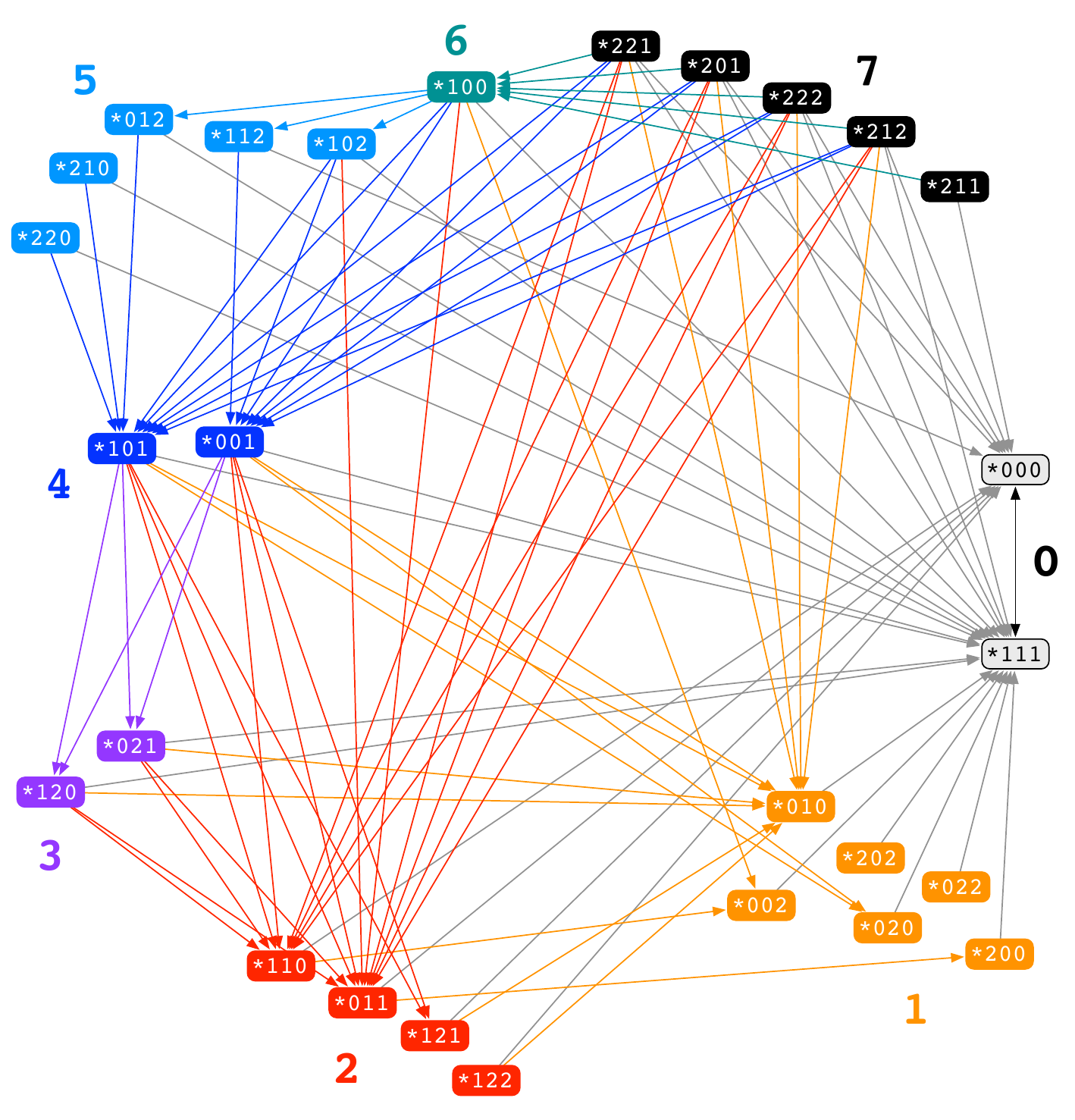}
    \caption{The projection graph $G_F(\vec A)$ for the algorithm $\vec A$ given in \tableref{tab:alg-3-4-1-7-c}, assuming that the faulty nodes are $F = \{0\}$. The actual configurations have been clustered according to the length of the longest path that avoids the good states $\vec 0_F$ and $\vec 1_F$. Based on the projection graph, it is straightforward to verify that for any initial state and for any adversarial activities the algorithm will stabilise in $t = 7$ steps.}\label{fig:alg-3-4-1-7-c}
\end{figure}

\end{document}